\pdfoutput=1
\documentclass{article}

    \PassOptionsToPackage{numbers, compress}{natbib}
 \usepackage[preprint]{neurips_2026}

\usepackage[utf8]{inputenc} 
\usepackage[T1]{fontenc}    
\usepackage{hyperref}       
\usepackage{url}            
\usepackage{booktabs}       
\usepackage{amsfonts}       
\usepackage{nicefrac}       
\usepackage{microtype}      
\usepackage{xcolor}         

\usepackage{graphicx}
\usepackage{subcaption}
\usepackage{multirow}
\usepackage{array}
\newcolumntype{C}{>{\centering\arraybackslash}m{3.1em}}
\usepackage{colortbl}
\definecolor{morseblue}{HTML}{DCE8F4} 
\usepackage{amsmath}
\usepackage{amssymb}
\usepackage{mathtools}
\usepackage{amsthm}
\usepackage{algorithm}
\usepackage{algorithmic}
\usepackage{float}
\usepackage[capitalize,noabbrev]{cleveref}
\usepackage{enumitem}
\usepackage{wrapfig}

\newcommand{\ensureroom}[1]{\par\begingroup
  \ifdim\pagegoal=\maxdimen\else
    \ifdim\dimexpr\pagegoal-\pagetotal\relax<#1\relax\newpage\fi
  \fi\endgroup}

\theoremstyle{plain}
\newtheorem{theorem}{Theorem}[section]
\newtheorem{proposition}[theorem]{Proposition}

\newtheorem{corollary}[theorem]{Corollary}
\theoremstyle{definition}

\newtheorem{assumption}[theorem]{Assumption}
\theoremstyle{remark}

\title{MoRSE: Task-Oriented Multi-Agent System\\with Mixture of Role-Subtask Experts}
\author{%
  \textbf{Peiwen Li} \quad \textbf{Shiyang Zhang} \quad \textbf{Yangtian Zhang} \quad \textbf{Sizhuang He} \\
  \textbf{David van Dijk} \quad \textbf{Rex Ying} \\
  Yale University, New Haven, CT, USA \\
  \texttt{\{peiwen.li, shiyang.zhang, yangtian.zhang, sizhuang.he\}@yale.edu} \\
  \texttt{\{david.vandijk, rex.ying\}@yale.edu}
}

\begin{document}

\maketitle

\begin{abstract}
  Large language model-based multi-agent systems have recently shown strong potential for complex, long-horizon tasks.
However, existing methods mainly rely on coarse prompt-level differentiation without parameter adaptation for diverse subtasks, resulting in insufficient inter-agent heterogeneity and limited specialized capability that bottleneck performance on tasks with complex requirements.
To address this, we introduce a Task-Oriented \textbf{\underline{M}}ulti-Agent System with Mixture \textbf{\underline{o}}f \textbf{\underline{R}}ole-\textbf{\underline{S}}ubtask \textbf{\underline{E}}xperts (\textbf{MoRSE}) that distinguishes agents with \emph{(role, subtask)-conditional} specialization at both the task structure and parameter levels.
To make agents' responsibility explicit at the task structure level, we formulate a task-oriented multi-agent system that decomposes each task into a dependency-aware Directed Acyclic Graph of subtasks and assigns each agent a specific (role, subtask), introducing task-level specialization across collaborating agents.
 Additionally, to address the diverse role and subtask parameter adaptation demands, we propose a dynamic Mixture of (role, subtask) LoRA Experts module with a prototype-based semantic router for subtasks, augmenting agents with parameter-level specialization on a shared LLM substrate cost-effectively.
  Then, to co-optimize experts and router stably under sparse task rewards, we further propose a hierarchical group-relative policy optimization with two-layer credit assignment that isolates expert updates from the cross-route variance introduced by routing decisions, disentangling expert quality from routing quality.
  Experiments on code-generation benchmarks across three backbones demonstrate the effectiveness of our approach, with improvements in both whole-task and step-wise performance, and the gains from trained specialization generalize across held-out task categories and domains.

\end{abstract}

\section{Introduction}
\label{sec:introduction}

Large language model (LLM)-based multi-agent systems (MAS) have emerged as an effective paradigm for addressing complex, long-horizon tasks, by orchestrating multiple role-specialized agents into structured multi-stage pipelines~\citep{wu2024autogen,hong2024metagpt,qian2024chatdev}. This paradigm has demonstrated strong performance across diverse domains such as collaborative software development with simulated engineering teams~\citep{hong2024metagpt,qian2024chatdev} and multi-step reasoning through inter-agent debate and critique~\citep{du2023debate}, with recent work further pushing collaboration to populations of hundreds or even thousands of agents~\citep{qian2024macnet}. Yet simply adding more agents to a shared backbone yields diminishing returns~\citep{yang2026diversity}; the real bottleneck is \emph{how distinctly each agent contributes to the joint solution}.

\looseness=-1
Existing MAS methods, however, pursue such distinctness mainly through prompt-level
role descriptions over frozen base models~\citep{wu2024autogen,hong2024metagpt,qian2024chatdev,qian2024macnet};
but without explicit subtask assignment, agents tend to produce \emph{redundant},
overlapping responsibilities and correlated outputs that saturate performance
quickly as the system scales~\citep{yang2026diversity}, leaving inter-agent
\emph{heterogeneity insufficient at the task structure level}. 
A growing line of work
moves beyond prompts toward parameter-level adaptation, including role-conditioned
adaptation in a single-agent setting~\citep{moragent}, and multi-agent reinforcement
learning (RL) frameworks like AT-GRPO~\citep{stronger_mas} and MARTI~\citep{marti}
that extend RL optimization to multi-agent settings. Yet these methods adapt mainly at the predefined and limited role level; the diverse subtasks within a single complex task impose distinct demands that a single shared base model cannot satisfy, leaving inter-agent \emph{heterogeneity insufficient at the parameter level} and bottlenecking performance on tasks with complex requirements.

To address these two levels of insufficiency, in this paper we study \emph{sufficient agent
heterogeneity} on complex open-ended multi-agent collaboration tasks from both
the \emph{task structure} and \emph{parameter} levels. We introduce a
Task-Oriented \textbf{\underline{M}}ulti-Agent System with Mixture \textbf{\underline{o}}f \textbf{\underline{R}}ole-\textbf{\underline{S}}ubtask \textbf{\underline{E}}xperts
(\textbf{MoRSE}) that explicitly manages subtasks within the whole task and
learns role- and subtask-specific parameter specialization across collaborating
agents cost-effectively on a shared backbone via a stable optimization scheme.
Here, an agent's \emph{role} denotes its reusable function drawn from a small
discrete set (e.g., merging upstream context or executing a step), whereas its
\emph{subtask} denotes the instance-specific unit of work described in natural
language; the two axes impose complementary specialization demands and jointly
define each agent's responsibility.

\looseness=-1
Specifically, we introduce a task-oriented MAS framework that exposes each agent's
(role, subtask) responsibility, reducing task-structure-level redundancy
and providing a substrate for fine-grained parameter-level specialization. Inspired by
graph-based planning~\citep{wu2025gap}, our model decomposes each task instance into a dependency-aware Directed Acyclic Graph (DAG) of subtasks and assigns each agent a specific (role, subtask). 
Building on this framework, we further propose a dynamic mixture of (role, subtask) LoRA experts that enhances inter-agent heterogeneity at the parameter level.
It attaches role and subtask experts selected per agent call by a learned prototype-based semantic router.
Jointly optimizing the experts and router under sparse, open-ended task rewards, however, can entangle expert quality with routing quality.
Finally, we propose a hierarchical group-relative policy optimization with
two-layer credit assignment that splits the task-level reward into a
within-route advantage updating experts within a routed combination
and a cross-route advantage updating the router across combinations.
We show analytically that it reduces gradient variance for expert updates
and empirically that it stabilizes co-optimization, reliably realizing the
parameter-level heterogeneity gains.

Experiments on complex code-generation benchmarks across multiple LLM
backbones demonstrate that \textbf{MoRSE} achieves the best held-out test
scores over all baselines in both whole-task and step-wise performance.
On held-out task categories, \textbf{MoRSE} further surpasses both its
untrained variant and a standard fixed-LoRA fine-tuning baseline, indicating
that the learned role-subtask experts generalize better under distribution
shift.
Our contributions are summarized as follows:
\begin{itemize}[leftmargin=1.5em, itemindent=0pt]
    \item We 
    introduce a
    task-oriented MAS framework with dependency-aware DAG decomposition and
    per-agent (role, subtask) assignment as substrate to reduce redundancy at the task-structure level. 

    \item We propose a dynamic mixture of (role, subtask)-conditioned LoRA
    experts on a shared backbone with a prototype-based semantic router for
    effective parameter-level specialization.

    \item We propose a hierarchical group-relative policy optimization with
    two-layer credit assignment that disentangles expert quality from routing
    quality to stabilize co-optimization.\footnote{Code: \url{https://github.com/lpwpower/MoRSE}.
    }
\end{itemize}

\section{Preliminaries}
\label{sec:preliminary}

\subsection{Empirical Motivation}
\label{subsec:empirical-motivation}

\looseness=-1
To diagnose the heterogeneity bottleneck identified in Sec.~\ref{sec:introduction} and motivate the design of \textbf{MoRSE}, we conduct two preliminary studies on SRDD~\citep{qian2024chatdev} with Qwen3-4B. The first measures the redundancy of role-prompted MAS to confirm the heterogeneity deficit is real, and to ground our adoption of subtask-based decomposition as the framework substrate. The second probes the limitation of parameter-efficient fine-tuning on a single base LM in satisfying the heterogeneous learning demands of diverse roles and subtasks, directly motivating the design of Mixture of Role-Subtask LoRA Experts.

\paragraph{Redundancy in role-prompted, graph-structured MAS.}
\label{subsec:prelim-redundancy}

For each SRDD task instance involving $n$ collaborating agents, we measure heterogeneity by the mean pairwise cosine similarity of node-level prompts $\{p_i\}_{i=1}^n$ and outputs $\{y_i\}_{i=1}^n$ under a frozen text encoder $f(\cdot)$ (gte-Qwen2-7B-instruct):
\begin{equation}
\mathrm{Sim}(\{t_i\}) \triangleq \frac{2}{n(n-1)} \sum_{i<j} \cos\!\big(\hat{f}(t_i),\hat{f}(t_j)\big),
\qquad t_i \in \{p_i\}_{i=1}^n \;\text{or}\; \{y_i\}_{i=1}^n,
\label{eq:sim-def}
\end{equation}
producing $\mathrm{Sim}_{\text{prompt}}$ and $\mathrm{Sim}_{\text{output}}$, where smaller values indicate lower redundancy. Figure~\ref{fig:prelim-p1-srdd-qwen} traces, for every SRDD instance, the per-sample shift from $\mathrm{Sim}_{\text{prompt}}$ (top axis row) to $\mathrm{Sim}_{\text{output}}$ (bottom axis row) under three methods: MacNet~\citep{qian2024macnet} without and with its review stage, both of which differentiate role-based agents at the \emph{prompt} level over \emph{static, predefined} graph topologies; and our \emph{task-decomposed dynamic} graph (\textbf{MoRSE}$_{\text{base}}$, the framework substrate of MoRSE without trained MoLE experts) that formulates the DAG at the $(\text{role}, \text{subtask})$ level. Both MacNet variants \emph{collapse} the per-sample distribution toward $\mathrm{Sim}_{\text{output}} \approx 1$ (median shift $\Delta_{\text{out-prompt}}{=}+0.14$ for both, indicating that the review stage does not relieve the redundancy). In contrast, MoRSE$_{\text{base}}$ \emph{spreads} the per-sample outputs apart ($\Delta{=}-0.15$, lowering median $\mathrm{Sim}_{\text{output}}$ from $0.75$ to $0.60$). However, even with our subtask decomposition, the median $\mathrm{Sim}_{\text{output}}$ still sits around $0.60$, confirming that structural decomposition alone does not close the heterogeneity deficit---motivating the parameter-level adaptation we develop in Sec.~\ref{sec:method}, which the next study further grounds in the base LM.

\begin{figure}[!htb]
\centering
\includegraphics[width=\linewidth]{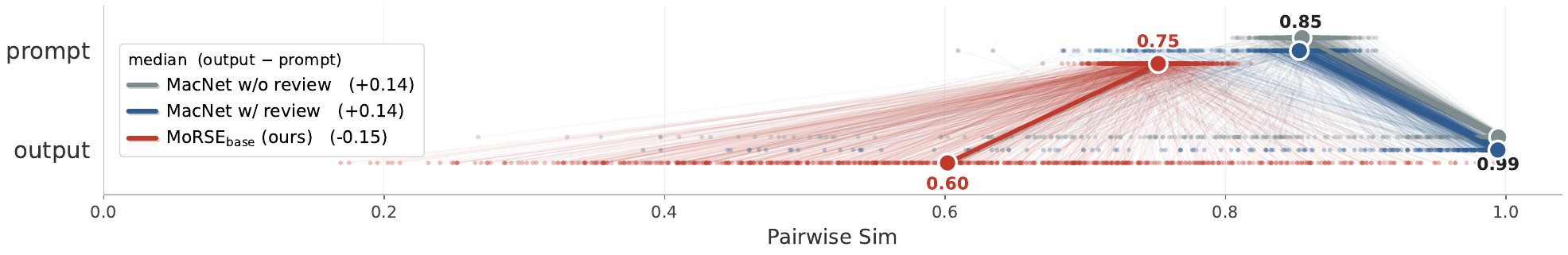}
\caption{\textbf{Heterogeneity deficit in role-prompted, graph-structured MAS.} Each thin line is one SRDD instance, drawn from its prompt-level $\mathrm{Sim}$ (top row) to its output-level $\mathrm{Sim}$ (bottom row); bold lines show per-method median trajectories. Smaller $\mathrm{Sim}$ indicates lower redundancy. The legend reports the median shift $\Delta=\mathrm{Sim}_{\text{output}}-\mathrm{Sim}_{\text{prompt}}$: positive (+) means the pipeline \emph{amplifies} agent-output redundancy, negative (-) means it \emph{disperses} agents apart.}
\label{fig:prelim-p1-srdd-qwen}
\end{figure}

\paragraph{Role-subtask mismatch in the base LM.}
\label{subsec:prelim-mismatch}

\looseness=-1
To inform parameter-efficient fine-tuning over the dynamic $(\text{role}, \text{subtask})$-DAG, we ask whether different roles and subtasks require mismatched rank-$\rho$ adaptations of the base LM that no single shared LoRA can simultaneously provide under its rank-$\rho$ bottleneck, a limitation we formalize in Theorem~\ref{thm:lora-subspace-conflict}, Appendix~\ref{app:role-subtask-mismatch}. Adapting subspace-based representation analyses~\citep{raghu2017svcca,kornblith2019similarity,hamm2008grassmann} to per-$(\text{role}, \text{subtask})$ groups, we take \texttt{execute} and \texttt{merge} as two example roles: Figure~\ref{fig:prelim-role-subtask-mismatch}(a) shows that real role labels yield substantially larger pairwise rank-$\rho$ subspace divergence ($1-\|V_a^{\top}V_b\|_F^2/\rho$, $\rho{=}8$) than a label-shuffled binary baseline across the deeper 8 transformer layers (where prior work shows task-relevant representations are most concentrated~\citep{tenney2019bert,meng2022locating}; gap $\Delta{\approx}0.40$). Fixing the role to \texttt{execute} and varying subtasks, we further cluster SRDD subtask descriptions into $K{=}8$ groups and apply the same diagnostic across the same deeper layers: real cluster labels yield divergence ${\sim}0.75$ vs.\ ${\sim}0.45$ under label-shuffled baselines, with $\sigma$-bands non-overlapping at every layer (Figure~\ref{fig:prelim-role-subtask-mismatch}(b)) and visually stable cluster-pair structure across layers (Figure~\ref{fig:prelim-role-subtask-mismatch}(c) for one layer; full per-layer panel in Appendix~\ref{app:role-subtask-mismatch}). Full procedure for both diagnostics is given in Appendix~\ref{app:role-subtask-subspace-procedure}. Together, these results reveal specialized adaptation needs along two axes---\emph{roles} (persistent, process-level) and \emph{subtasks} (fine-grained, semantic)---directly motivating a dynamic mixture of LoRA experts with a router for $(\text{role}, \text{subtask})$ specialization on a shared base model (Sec.~\ref{subsec:mole}).

\begin{figure}[!htb]
\centering
\begin{subfigure}{0.36\linewidth}
\centering
\includegraphics[width=\linewidth]{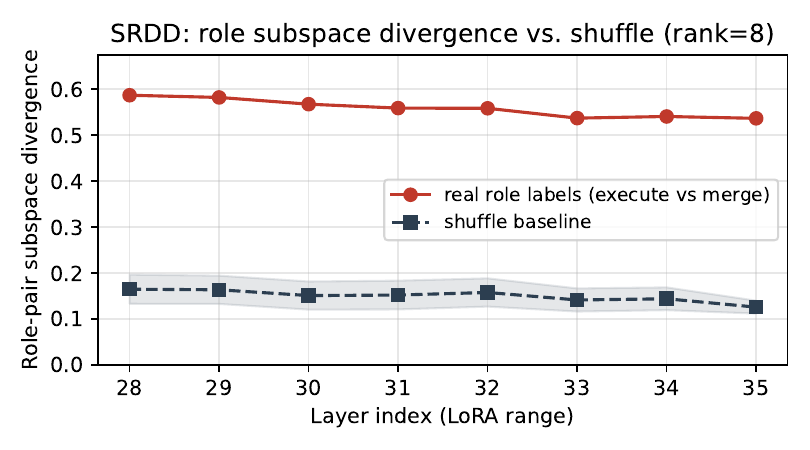}
\caption{Role subspace divergence.}
\end{subfigure}
\hfill
\begin{subfigure}{0.36\linewidth}
\centering
\includegraphics[width=\linewidth]{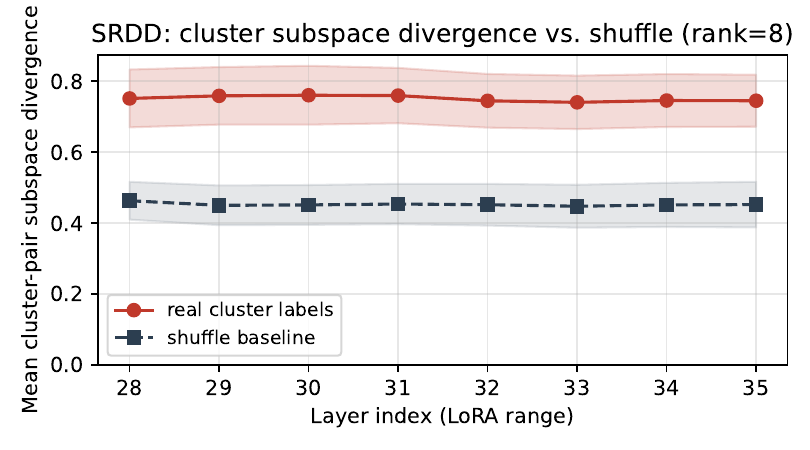}
\caption{Subtask subspace divergence.}
\end{subfigure}
\hfill
\begin{subfigure}{0.2\linewidth}
\centering
\includegraphics[width=\linewidth]{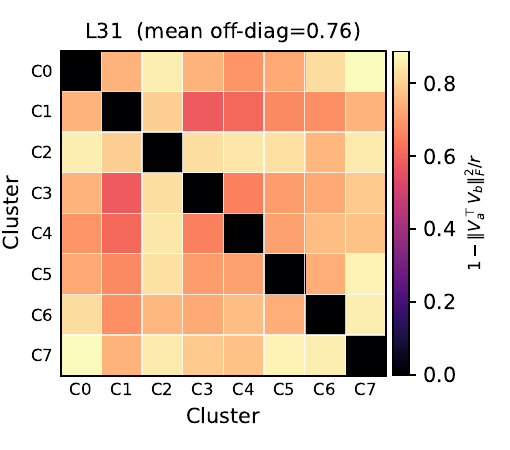}
\caption{Cluster-pair structure example.}
\end{subfigure}
\caption{\textbf{Role and subtask mismatch in the base LM} (Qwen3-4B, SRDD). (a) Pairwise rank-$8$ subspace divergence between two roles (\texttt{execute} and \texttt{merge}) per LoRA-injected layer; the dashed line shows the label-shuffled binary baseline. Higher = more divergent subspaces. (b) Same diagnostic with $K{=}8$ subtask clusters; shaded bands show label-shuffled baselines. (c) Heatmap of pairwise cluster divergence at layer~31 (representative); brighter = more divergent.}
\label{fig:prelim-role-subtask-mismatch}
\end{figure}

The two studies together reveal further heterogeneity demand at the \emph{task-structural} (output redundancy) and \emph{parameter} (role/subtask mismatch) levels, which we resolve with a dynamic $(\text{role}, \text{subtask})$-DAG and a router-composed mixture of LoRA experts on a shared base model (Sec.~\ref{sec:method}).

\subsection{Problem Formulation}
\label{sec:problem}

In this paper, we consider a complex task instance $\mathcal{T}$ that admits decomposition into $|V|$ interdependent subtasks $\{v_i\}_{i=1}^{|V|}$ with natural-language descriptions $\{s_i\}_{i=1}^{|V|}$. Each subtask $v_i$ is handled by a collaborating agent $a_i$ with role $r_i$ drawn from a small discrete role set $\mathcal{R}$. Agent $a_i$ produces an artifact $y_i$ conditioned on its $(\text{role}, \text{subtask})$ pair $(r_i, s_i)$ and the upstream artifacts $\{y_j\}_{j \in \mathrm{Pa}(i)}$, where $\mathrm{Pa}(i) \subseteq \{1, \ldots, |V|\}$ indexes the subtasks that $v_i$ depends on.

All agents share a frozen base language model with parameters $\theta_0$, and $(\text{role}, \text{subtask})$-conditional heterogeneity is realised via parameter-efficient adaptation $\theta_i = \theta_0 + \Delta\theta_i(r_i, s_i)$ with $\dim \Delta\theta_i \ll \dim \theta_0$, so that $y_i \sim p(\cdot \mid r_i, s_i, \{y_j\}_{j \in \mathrm{Pa}(i)}; \theta_i)$. Given a reward $R(\{y_i\}_{i=1}^{|V|})$ defined over the joint outputs, we jointly optimise all learnable parameters $\Theta$ of the chosen $\Delta\theta$ parameterisation to maximise expected reward over a task distribution $\mathcal{D}$:
\begin{equation}
\label{eq:objective}
    \max_{\Theta} \; \mathbb{E}_{\mathcal{T} \sim \mathcal{D},\;
    \{y_i\} \sim p_{\Theta}(\cdot \mid \mathcal{T})} \big[
    R(\{y_i\}) \big].
\end{equation}

\section{Methodology}
\label{sec:method}

\begin{figure*}[!b]
\centering
\includegraphics[width=\linewidth]{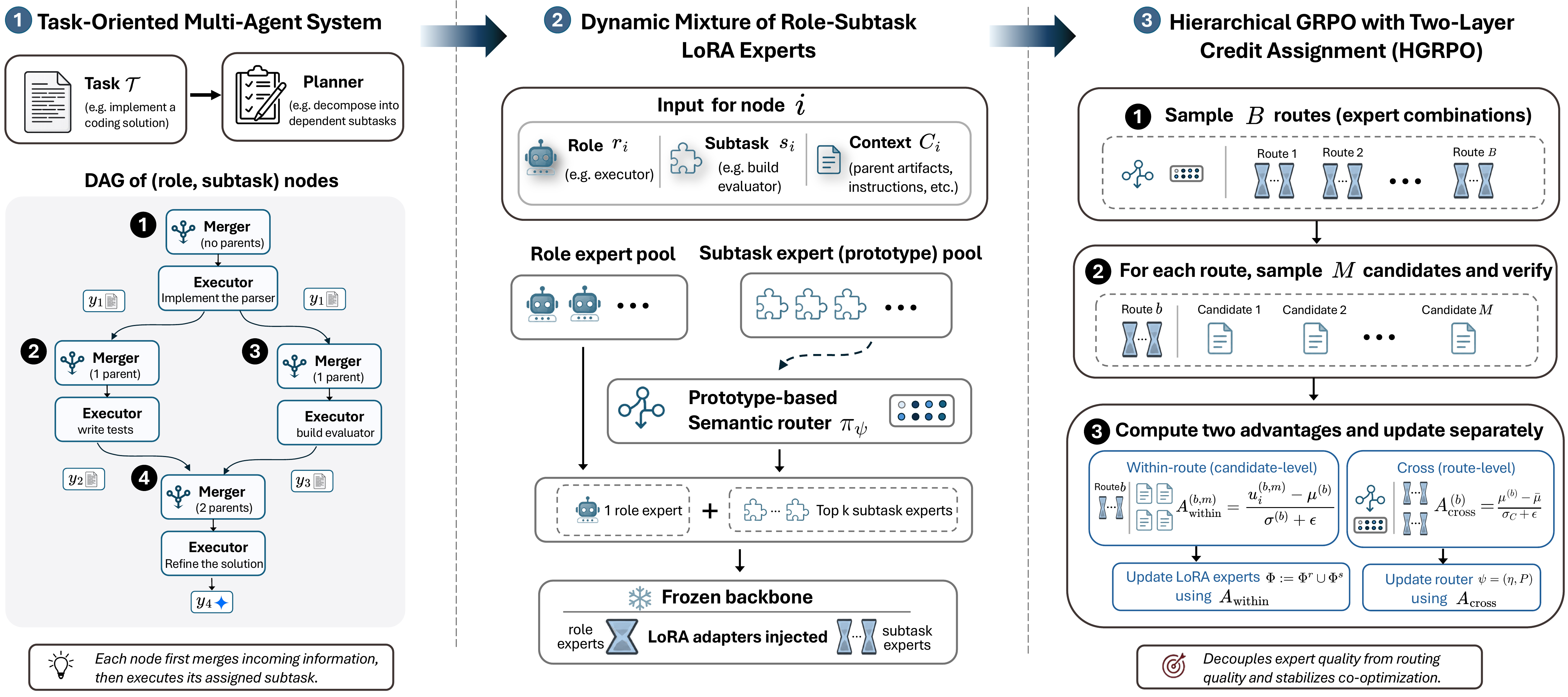}
\caption{Overview of the MoRSE framework: (1) task-oriented MAS that decomposes $\mathcal{T}$ into a per-instance $(\text{role}, \text{subtask})$-DAG with per-node Merger$\to$Executor execution; (2) dynamic mixture of role-subtask LoRA experts (role expert + top-$K$ subtask experts) on a shared frozen backbone; (3) hierarchical GRPO with two-layer credit ($A_{\text{within}}, A_{\text{cross}}$) that decouples expert and router updates.}
\label{fig:tomas-framework}
\end{figure*}

\looseness=-1
We instantiate the Problem Formulation in Sec.~\ref{sec:problem} with three tightly coupled components realising $(\text{role}, \text{subtask})$-conditional specialization at both the task-structure and parameter levels (Figure~\ref{fig:tomas-framework}). We first introduce a task-oriented multi-agent system that decomposes each task into a subtask DAG with a rule-based step verifier, supplying the per-node $(r_i, s_i)$ labels and step rewards required by the downstream modules (Sec.~\ref{subsec:tomas}). On the shared frozen backbone, we then build a dual-factorised mixture of role and subtask LoRA experts composed by a prototype-based semantic router, specialising at the parameter level across an \emph{open-ended subtask space} at near single-model cost (Sec.~\ref{subsec:mole}). Finally, we propose hierarchical GRPO with two-layer credit assignment that decouples expert and router updates and provably reduces the LoRA-expert gradient variance for stable co-optimization (Sec.~\ref{subsec:hgrpo}).

\subsection{Task-Oriented Multi-Agent System (ToMAS)}
\label{subsec:tomas}
To address the task-structure-level redundancy diagnosed in Sec.~\ref{subsec:prelim-redundancy}, we explicitly decompose each task instance into subtasks structured by dependencies, inspired by graph-based planning~\citep{wu2025gap}, and execute the resulting DAG with a step-level rule-based verifier at every node. This produces two signals required by the downstream modules: per-node $(r_i, s_i)$ labels that condition parameter specialization (Sec.~\ref{subsec:mole}), and step-level rewards $\{u_i\}_{i \in V}$ that supply the training signal for that specialization via per-step credit assignment (Sec.~\ref{subsec:hgrpo}).

\paragraph{DAG-based task decomposition and execution.}
Given a task instance $\mathcal{T}$, a planner dynamically produces a per-instance dependency-aware DAG $\mathcal{G} = (V, E)$, where each node $v_i \in V$ carries a subtask description $s_i$ and each directed edge $(v_i, v_j) \in E$ indicates that subtask $j$ depends on the artifact $y_i$ produced by subtask $i$. We execute $\mathcal{G}$ in topological order: at each node $v_i$, a \emph{Merger} first aggregates the upstream artifacts $\{y_j\}_{j \in \mathrm{Pa}(i)}$ into a coherent context $\mathcal{C}_i \sim p(\cdot \mid r, \mathcal{T}, \{y_j\}_{j \in \mathrm{Pa}(i)};\, \theta^{r}_i),\ r{=}\texttt{merge}$ (which reduces to $y_j$ when $|\mathrm{Pa}(i)| = 1$ and to $\emptyset$ when $\mathrm{Pa}(i) = \emptyset$); the \emph{Executor} then produces the current artifact $y_i \sim p(\cdot \mid r, \mathcal{T}, s_i, \mathcal{C}_i;\, \theta^{r}_i),\ r{=}\texttt{execute}$. Each agent call carries its own role $r$, which selects a role-conditioned LoRA composition in Sec.~\ref{subsec:mole}. We consider a general \texttt{execute} role here, while the framework readily extends to additional specialized roles (e.g., \texttt{review}, \texttt{refine}). Training proceeds sequentially to expose per-subtask credit signals, while at inference the DAG also allows independent branches to run in parallel for efficiency.

\paragraph{Rule-based verifier for reward signals.}
A task-completion-only reward is too sparse to provide sufficient supervision for fine-grained per-subtask credit assignment; we therefore apply a rule-based verifier at every DAG node to produce a step reward $u_i$ on each output artifact $y_i$, which also blocks $y_i$ from downstream propagation upon failing task-specific validity criteria (e.g., output format, basic correctness), preventing error compounding across the DAG. These step rewards $\{u_i\}_{i \in V}$ feed downstream policy optimization (Sec.~\ref{subsec:hgrpo}); per-task details are given in Appendix~\ref{app:reward-inst}.

\subsection{Prototype-Routed Mixture of Role-Subtask LoRA Experts (MoLE)}
\label{subsec:mole}
Motivated by the role-subtask architecture mismatch under parameter-efficient fine-tuning diagnosed in Sec.~\ref{subsec:prelim-mismatch}, we address it via learnable specialization on the shared frozen backbone $\theta_0$. We employ a dynamic mixture of dual-factorized LoRA experts---a role expert pool $\Phi^r$ activated per role, and a subtask expert pool $\Phi^s$ from which a subset is selected per subtask by a lightweight prototype-based semantic router---keeping training and inference cost close to a single-model system.

\paragraph{Mixture of dual-factorized LoRA experts.}
The expert pool is partitioned into role experts $\Phi^r$ (one per role $r \in \mathcal{R}$) and subtask experts $\Phi^s$ (shared across subtasks), all attached as LoRA pairs to the frozen backbone $\theta_0$ (architecture details in Appendix~\ref{app:impl-details}). For each agent call at node $v_i$ with role $r_i$, the active set $\mathcal{E}_i$ combines the role expert $e(r_i)$ with a subtask-expert subset $\mathcal{E}_{s,i}$:
\begin{equation}
\mathcal{E}_i = \{e(r_i)\} \cup \mathcal{E}_{s,i},
\qquad e(r_i) \in \Phi^r, \;\;
\mathcal{E}_{s,i} \subset \Phi^s.
\label{eq:expert-set}
\end{equation}
This active set yields the LoRA-composed projection at each adapted layer $\ell$:
\begin{equation}
W_\ell^{(i)} = W_\ell + \sum_{k \in \mathcal{E}_i} w_k \cdot \frac{\alpha}{\rho} B_{\ell, k} A_{\ell, k},
\label{eq:lora-comp}
\end{equation}
where $(A_{\ell, k}, B_{\ell, k})$ is the rank-$\rho$ LoRA pair of expert $k$, $\alpha$ is a scaling, and we use uniform aggregation weights $w_k = 1/|\mathcal{E}_i|$. The subtask-expert subset $\mathcal{E}_{s,i}$ is selected by the prototype router introduced below, except at Merger join nodes ($|\mathrm{Pa}(i)| > 1$), where it is inherited from upstream Executors as $\mathcal{E}_{s, i}^{(\texttt{merge})} = \bigcup_{j \in \mathrm{Pa}(i)} \mathcal{E}_{s, j}^{(\texttt{execute})}$, exposing the merge role to the same subtask-specific adaptation directions that produced the upstream artifacts.

\paragraph{Dynamic prototype-based semantic router.}
We introduce a router $\pi_\psi$ that scores subtask experts against per-expert learnable prototypes, inducing a soft semantic partition over the open-ended subtask space rather than a hard cluster-to-expert assignment. We encode subtask description $s_i$ via a frozen embedder, then apply a trainable projection $h_\eta$ to produce $h_\eta(s_i) \in \mathbb{R}^D$ in the same latent space as the per-expert prototypes $P \in \mathbb{R}^{K_s \times D}$ ($K_s = |\Phi^s|$):
\begin{equation}
z_k = \frac{h_\eta(s_i)^\top p_k}{\|h_\eta(s_i)\|_2\, \|p_k\|_2},
\qquad
\pi_\psi(k \mid s_i) = \mathrm{softmax}(z)_k,
\label{eq:router}
\end{equation}
where $p_k \in \mathbb{R}^D$ is the $k$-th subtask expert's prototype, $z_k$ is its cosine similarity to the projected subtask embedding, and $\pi_\psi(k \mid s_i)$ is the probability of routing $s_i$ to expert $k$; we sample top-$K$ from $\pi_\psi(\cdot \mid s_i)$ stochastically during training and greedily at inference.

\paragraph{Trainable parameters.}
The trainable components of this module are the LoRA expert matrices
$\{(A_{\ell, k}, B_{\ell, k})\}$ for all experts in $\Phi := \Phi^r \cup \Phi^s$, and the router parameters $\psi = (\eta, P)$
comprising the embedding-projection parameters $\eta$ and the prototypes $P$. Backbone $\theta_0$ remains frozen throughout training. Together $\Phi$ and $\psi$ instantiate the abstract learnable parameters $\Theta = \Phi \cup \psi$ from Sec.~\ref{sec:problem}.

\subsection{Hierarchical GRPO with Two-Layer Credit Assignment (HGRPO)}
\label{subsec:hgrpo}
Co-optimizing $\Phi$ and $\pi_\psi$ under step rewards $\{u_i\}_{i \in V}$ entangles their updates under a single shared advantage, exposing each to uncontrolled reward variation and inflating gradient variance. We address this with a \emph{two-layer credit assignment} that uses separate within-route and cross-route advantages to update experts and the router: the within-route conditional baseline strictly reduces the LoRA-expert gradient variance compared to standard GRPO (a single shared advantage for both expert and router updates), while the cross-route formulation supplies the router with a well-scaled per-route signal.

\paragraph{Two-layer credit assignment.}
At each node $v_i$ during training, we sample $B$ routes from $\pi_\psi(\cdot \mid s_i)$ (each fixing an expert combination $\mathcal{E}_i^{(b)}$), generate $M$ candidates $\{y_i^{(b, m)}\}$ per route, and obtain step rewards $\{u_i^{(b, m)}\}$ from the verifier. We form two advantages with conditional baselines:
\begin{equation}
A_{\text{within}}^{(b, m)} = \frac{u_i^{(b, m)} - \mu^{(b)}}{\sigma^{(b)} + \epsilon},
\qquad
A_{\text{cross}}^{(b)} = \frac{\mu^{(b)} - \bar\mu}{\sigma_C + \epsilon},
\label{eq:hgrpo-adv}
\end{equation}
where $\mu^{(b)}, \sigma^{(b)}$ are the within-route mean and standard deviation, $\bar\mu = \frac{1}{B}\sum_b \mu^{(b)}$ is the across-route baseline, and $\sigma_C$ is the standard deviation of $\{\mu^{(b)}\}_{b=1}^{B}$. Following GRPO~\citep{shao2024deepseekmath}, $A_{\text{within}}$ updates the LoRA experts under fixed $\mathcal{E}_i^{(b)}$ and $A_{\text{cross}}$ updates the router via the route log-likelihood:
\begin{align}
\mathcal{L}_{\text{LoRA}} &= -\!\sum_{b, m}\! A_{\text{within}}^{(b, m)} \cdot \log p\!\left(y_i^{(b, m)} \mid \mathcal{T}, r_i, s_i, \mathcal{C}_i; \mathcal{E}_i^{(b)}\right), \label{eq:lora-loss}\\
\mathcal{L}_{\text{router}} &= -\alpha_\pi \sum_b A_{\text{cross}}^{(b)} \cdot \log \pi_\psi\!\left(\mathcal{E}_{s, i}^{(b)} \mid s_i\right),
\label{eq:router-loss}
\end{align}
with $\alpha_\pi$ balancing router and LoRA gradients. The within-route baseline $\mu^{(b)}$ removes the cross-route component $\sigma_C^2$ from the LoRA-expert gradient variance; the cross-route signal $A_{\text{cross}}^{(b)}$ supplies the router with a well-scaled per-route gradient.

\paragraph{Variance reduction analysis.}
Decompose the reward variance into within- and cross-route components,
\begin{equation}
\sigma_W^2 := \mathbb{E}_b\big[\mathrm{Var}_m(u^{(b,m)} \mid b)\big] \quad\text{(within-route)},
\qquad
\sigma_C^2 := \mathrm{Var}_b\big[\mu^{(b)}\big] \quad\text{(cross-route)},
\label{eq:variance-decomp}
\end{equation}
with $\mu^{(b)} := \mathbb{E}_m[u^{(b,m)} \mid b]$ the mean reward of route $b$ (so $\mathrm{Var}(u) = \sigma_W^2 + \sigma_C^2$).

\begin{proposition}[HGRPO LoRA-expert variance reduction]
\label{prop:hgrpo-variance}
HGRPO and standard GRPO share the same expected gradient; at finite group sizes $M, B$, the HGRPO LoRA-expert gradient has strictly lower variance,
\begin{equation}
\mathrm{Var}\big[g^{\mathrm{LoRA}}_{\mathrm{HGRPO}}\big] = \mathrm{Var}\big[g^{\mathrm{LoRA}}_{\mathrm{single}}\big] - \sigma_C^2 \cdot \mathbb{E}\big[\|\nabla_\Phi \log p\|^2\big].
\label{eq:hgrpo-variance-reduction}
\end{equation}
\end{proposition}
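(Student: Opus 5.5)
We will prove the statement within a simplified single-node model that makes the identity exact. We write $g = A\cdot\nabla_\Phi\log p(y\mid\mathcal{E}^{(b)})$ for the per-sample LoRA gradient. Standard GRPO uses the shared advantage $A_{\text{single}} = u^{(b,m)} - \bar\mu$. HGRPO uses $A_{\text{within}} = u^{(b,m)} - \mu^{(b)}$. Throughout, baselines are treated as their population values (the $M,B\to$ population means of Eq.~\eqref{eq:variance-decomp}), and the $\sigma$-normalizations are absorbed into a common scale. We also assume the score $S := \nabla_\Phi\log p$ is independent of the route-level quantity $\mu^{(b)}-\bar\mu$ given the route, with $\mathbb{E}[\|S\|^2\mid b]$ constant in $b$. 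These modelling assumptions are stated up front, since the displayed equality \eqref{eq:hgrpo-variance-reduction} holds exactly only under them.

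\textbf{Step 1: equal expectation.} We have $A_{\text{single}} - A_{\text{within}} = \mu^{(b)} - \bar\mu$, which is constant given the route $b$. Within a route, $y$ is sampled from $p(\cdot\mid\mathcal{E}^{(b)})$, so the score identity gives $\mathbb{E}_m[S\mid b] = 0$. Hence $\mathbb{E}[(\mu^{(b)}-\bar\mu)S] = \mathbb{E}_b\big[(\mu^{(b)}-\bar\mu)\,\mathbb{E}_m[S\mid b]\big] = 0$, and the two expected gradients coincide. This is the standard baseline argument, applied conditionally on the route.

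\textbf{Step 2: variance decomposition.} Write $g_{\text{single}} = g_{\text{HGRPO}} + (\mu^{(b)}-\bar\mu)S$. Then
\[
\mathrm{Var}[g_{\text{single}}] = \mathrm{Var}[g_{\text{HGRPO}}] + \mathbb{E}\big[(\mu^{(b)}-\bar\mu)^2\|S\|^2\big] + 2\,\mathbb{E}\big[(u-\mu^{(b)})(\mu^{(b)}-\bar\mu)\|S\|^2\big].
\]
For the cross term, the factor $\mu^{(b)}-\bar\mu$ is fixed given $b$. We need $\mathbb{E}_m[(u-\mu^{(b)})\|S\|^2\mid b]=0$, which is the usual GRPO variance-analysis idealization that the reward residual and the score magnitude are uncorrelated within a route. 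We expect this step to be the main obstacle. Without this assumption the identity becomes an inequality-free expression containing an extra covariance term, so we will state the assumption explicitly rather than hide it. The middle term then factorizes by the conditional-independence assumption as $\mathbb{E}_b[(\mu^{(b)}-\bar\mu)^2]\,\mathbb{E}[\|S\|^2] = \sigma_C^2\,\mathbb{E}[\|\nabla_\Phi\log p\|^2]$, which gives \eqref{eq:hgrpo-variance-reduction}.

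\textbf{Step 3: strictness and finite groups.} The reduction is strictly positive whenever $\sigma_C^2>0$ and the score is nondegenerate. For finite $M,B$, the empirical baselines $\hat\mu^{(b)}$ and $\hat{\bar\mu}$ introduce $O(1/M)$ and $O(1/B)$ leave-in bias corrections. We will either use leave-one-out baselines, which keeps Step 1 exact, or note that these corrections appear symmetrically in both estimators and do not affect the $\sigma_C^2$ gap at leading order. The route-level variance used in the formula is the $\sigma_C^2$ of Eq.~\eqref{eq:variance-decomp}.
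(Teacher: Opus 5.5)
Your proof follows essentially the paper's argument: the conditional score identity gives equal expectations, and the split $u-\bar\mu=(u-\mu^{(b)})+(\mu^{(b)}-\bar\mu)$ yields the identity once the cross term vanishes and the middle term factorizes as $\sigma_C^2\,\mathbb{E}[\|\nabla_\Phi\log p\|^2]$. You are more careful than the paper on two points: the cross term requires $u-\mu^{(b)}$ to be uncorrelated with $\|\nabla_\Phi\log p\|^2$ within each route, since scalar orthogonality of the two advantage pieces alone does not suffice, and strictness additionally requires $\sigma_C^2>0$.
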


\looseness=-1
This isolates the LoRA expert from cross-route variance it cannot influence, attributing reward signals to expert quality rather than routing decisions; the construction applies the classical conditional-baseline variance reduction~\citep{greensmith2004variance} in a two-layer form to the router--expert credit structure, and the formal proof is in Appendix~\ref{app:hgrpo-variance-proof}.
Proposition~\ref{prop:hgrpo-variance} is stated for the unnormalized
advantages, while the implemented estimator (Eq.~\ref{eq:hgrpo-adv}) further
standardizes each advantage by the per-route scale $\sigma^{(b)}+\epsilon$.
A short bridge argument shows that the variance reduction persists for this
normalized estimator in a reweighted form, up to finite-sample corrections
(Corollary~\ref{cor:hgrpo-variance-normalized},
Appendix~\ref{app:hgrpo-variance-proof}).

In conclusion, the three modules jointly instantiate the abstract framework in Sec.~\ref{sec:problem}: Sec.~\ref{subsec:tomas} supplies the decomposition $\{v_i, \mathrm{Pa}(i)\}$ and step rewards $\{u_i\}$; Sec.~\ref{subsec:mole} realises the $(r_i, s_i)$-conditional adaptation $\Delta\theta_i$ on the shared backbone $\theta_0$; and Sec.~\ref{subsec:hgrpo} optimises the learnable parameters $\Theta = \Phi \cup \psi$ with provable variance reduction on the LoRA-expert gradient.

\section{Experiment}
\label{sec:experiment}

We evaluate \textbf{MoRSE} on complex code-generation benchmarks across three
backbones, asking three questions: whether the task-oriented framework
improves inference without any training, whether MoLE together with HGRPO
delivers further gains through parameter-level specialization, and whether
the designed mechanisms behave as intended, reducing inter-agent output
redundancy, stabilizing co-optimization, and generalizing across held-out
task categories and domains.

\begin{table}[!t]
\centering
\small
\setlength{\tabcolsep}{3.5pt}
\caption{Main results on SRDD and SciCode. \textbf{MoRSE$_{\text{base}}$} is
the untrained version. \emph{All} scores inference on the full benchmark;
\emph{Test} is held out from \textbf{MoRSE}'s training and is harder on
average, so numbers are not comparable across the two sets. For each column and
backbone, the best is in \textbf{bold} (with \colorbox{morseblue}{light-blue}
on Test Set); the second-best is \underline{underlined}.}
\label{tab:main_results}
\resizebox{\textwidth}{!}{
\begin{tabular}{l|l|CCC|CCC|CCC|CCC}
\toprule
\multicolumn{1}{c|}{\multirow{3}{*}{\textbf{Type}}} & \multicolumn{1}{c|}{\multirow{3}{*}{\textbf{Method}}}
& \multicolumn{6}{c|}{\textbf{SRDD}}
& \multicolumn{6}{c}{\textbf{SciCode}} \\
\cmidrule(lr){3-8}\cmidrule(lr){9-14}
\multicolumn{1}{c|}{} & \multicolumn{1}{c|}{}
& \multicolumn{3}{c|}{\textbf{All Set}}
& \multicolumn{3}{c|}{\textbf{Test Set}}
& \multicolumn{3}{c|}{\textbf{All Set}}
& \multicolumn{3}{c}{\textbf{Test Set}} \\
\cmidrule(lr){3-5}\cmidrule(lr){6-8}\cmidrule(lr){9-11}\cmidrule(lr){12-14}
\multicolumn{1}{c|}{} & \multicolumn{1}{c|}{}
& {\scriptsize\shortstack{\textbf{Exec}\\\textbf{(\%)}}} & {\scriptsize\shortstack{\textbf{ECI}\\\textbf{Mean}}} & {\scriptsize\shortstack{\textbf{ECI}\\\textbf{Product}}}
& {\scriptsize\shortstack{\textbf{Exec}\\\textbf{(\%)}}} & {\scriptsize\shortstack{\textbf{ECI}\\\textbf{Mean}}} & {\scriptsize\shortstack{\textbf{ECI}\\\textbf{Product}}}
& {\scriptsize\shortstack{\textbf{Step}\\\textbf{Pass (\%)}}} & {\scriptsize\shortstack{\textbf{Mean Step}\\\textbf{Pass (\%)}}} & {\scriptsize\shortstack{\textbf{Problem}\\\textbf{Pass (\%)}}}
& {\scriptsize\shortstack{\textbf{Step}\\\textbf{Pass (\%)}}} & {\scriptsize\shortstack{\textbf{Mean Step}\\\textbf{Pass (\%)}}} & {\scriptsize\shortstack{\textbf{Problem}\\\textbf{Pass (\%)}}} \\
\midrule
\multicolumn{14}{c}{\cellcolor{gray!15}\textbf{Qwen3-4B-Instruct}} \\
\midrule
Single    & Base model         & \underline{67.01} & 0.686 & \underline{0.284} & 65.00 & 0.690 & \underline{0.286} & 18.05 & 17.55 & 1.25 & \underline{21.43} & 19.33 & 0.00 \\
\cmidrule{1-14}
\multirow{5}{*}{Multi} & ChatChain        & 58.33 & \underline{0.687} & 0.264 & 42.86 & 0.641 & 0.205 & 15.98 & 17.74 & 2.50 & 14.29 & 13.71 & 0.00 \\
          & MacNet             & 61.81 & 0.673 & 0.261 & 60.76 & 0.666 & 0.254 &  7.99 &  9.60 & 2.50 &  9.18 &  9.50 & 0.00 \\
          & Aflow              & 65.83 & 0.634 & 0.162 & 52.50 & 0.580 & 0.121 & \underline{19.82} & \underline{21.91} & \underline{3.75} & 18.37 & \underline{20.23} & 0.00 \\
\cmidrule(lr){2-14}
          & \textbf{MoRSE$_{\text{base}}$} & \textbf{77.00} & \textbf{0.718} & \textbf{0.299} & \underline{72.50} & \underline{0.693} & 0.268 & \textbf{25.44} & \textbf{27.22} & \textbf{6.25} & 20.41 & 19.65 & 0.00 \\
          & \textbf{MoRSE} & -     & -     & -     & \cellcolor{morseblue}\textbf{86.25} & \cellcolor{morseblue}\textbf{0.755} & \cellcolor{morseblue}\textbf{0.360} & -     & -     & -    & \cellcolor{morseblue}\textbf{29.00} & \cellcolor{morseblue}\textbf{32.50} & \cellcolor{morseblue}\textbf{10.00} \\
\midrule
\multicolumn{14}{c}{\cellcolor{gray!15}\textbf{Llama-3.1-8B-Instruct}} \\
\midrule
Single    & Base model         & 63.45 & 0.652 & \underline{0.220} & 70.51 & 0.674 & 0.251 & \underline{12.72} & 15.70 & 2.50 & 10.20 & 10.02 & 0.00 \\
\cmidrule{1-14}
\multirow{5}{*}{Multi} & ChatChain        & 66.25 & 0.628 & 0.192 & 66.25 & 0.628 & 0.192 & 12.13 & \underline{15.82} & \textbf{3.75} &  8.16 & 10.02 & 0.00 \\
          & MacNet             & 55.00 & 0.591 & 0.143 & 55.00 & 0.591 & 0.143 & 11.83 & 14.30 & 2.50 &  6.12 &  7.20 & 0.00 \\
          & Aflow              & \underline{73.37} & \underline{0.665} & 0.189 & 65.00 & 0.636 & 0.167 & 12.43 & 15.20 & 2.50 & 11.22 & 10.38 & 0.00 \\
\cmidrule(lr){2-14}
          & \textbf{MoRSE$_{\text{base}}$} & \textbf{75.00} & \textbf{0.688} & \textbf{0.252} & \underline{78.75} & \underline{0.707} & \underline{0.281} & \textbf{17.75} & \textbf{18.07} & \underline{2.50} & \underline{18.37} & \underline{14.92} & 0.00 \\
          & \textbf{MoRSE} & -     & -     & -     & \cellcolor{morseblue}\textbf{87.34} & \cellcolor{morseblue}\textbf{0.767} & \cellcolor{morseblue}\textbf{0.389} & -     & -     & -    & \cellcolor{morseblue}\textbf{26.00} & \cellcolor{morseblue}\textbf{15.50} & \cellcolor{morseblue}\textbf{5.00} \\
\midrule
\multicolumn{14}{c}{\cellcolor{gray!15}\textbf{Gemma-4-31B-IT}} \\
\midrule
Single    & Base model         & \underline{94.00} & \textbf{0.763} & \textbf{0.338} & \underline{96.25} & \underline{0.767} & \underline{0.333} & \underline{49.11} & \underline{55.93} & \underline{28.75} & 45.92 & 53.02 & 20.00 \\
\cmidrule{1-14}
\multirow{5}{*}{Multi} & ChatChain        & 89.95 & 0.706 & 0.260 & 88.75 & 0.700 & 0.256 & 48.22 & 54.74 & 27.50 & 40.82 & 49.38 & 15.00 \\
          & MacNet             & 89.95 & 0.705 & 0.246 & 88.75 & 0.688 & 0.228 & 46.45 & 51.83 & 25.00 & \underline{46.94} & \underline{53.44} & 20.00 \\
          & Aflow              & 91.46 & 0.720 & 0.272 & 87.50 & 0.690 & 0.246 & 46.45 & 52.68 & 25.00 & 45.92 & 53.02 & 20.00 \\
\cmidrule(lr){2-14}
          & \textbf{MoRSE$_{\text{base}}$} & \textbf{95.68} & \underline{0.734} & \underline{0.316} & 93.67 & 0.728 & 0.321 & \textbf{51.78} & \textbf{56.92} & \textbf{31.25} & 45.92 & 47.25 & 15.00 \\
          & \textbf{MoRSE}     & -     & -     & -     & \cellcolor{morseblue}\textbf{98.65} & \cellcolor{morseblue}\textbf{0.782} & \cellcolor{morseblue}\textbf{0.394} & -     & -     & -     & \cellcolor{morseblue}\textbf{51.02} & \cellcolor{morseblue}\textbf{56.36} & \cellcolor{morseblue}\textbf{20.00} \\
\bottomrule
\end{tabular}
}
\end{table}

\paragraph{Datasets.}
We evaluate on two code-generation benchmarks with complex task requirements: \textbf{SRDD}~\citep{qian2024chatdev}\footnote{\url{https://github.com/OpenBMB/ChatDev/tree/chatdev1.0/SRDD}} ($1{,}200$ software-requirement descriptions across $40$ application categories) and \textbf{SciCode}~\citep{tian2024scicode}\footnote{\url{https://huggingface.co/datasets/SciCode1/SciCode}} ($80$ scientific computing problems spanning $5$ scientific domains, decomposed into $338$ subproblems with executable unit tests at the step and problem level). 

\paragraph{Baselines.}
\looseness=-1
We compare against three multi-agent baselines spanning role-based, graph-structured, and search-based designs --- \textbf{ChatChain}~\citep{qian2024chatdev}, \textbf{MacNet}~\citep{qian2024macnet}, and \textbf{AFlow}~\citep{zhang2024aflow} --- together with a single-agent reference (\textbf{Base model}) that runs each backbone alone. All methods share the same Qwen3-4B-Instruct, Llama-3.1-8B-Instruct, and Gemma-4-31B-IT backbones, prompts, and decoding configuration, with each method's native revision mechanism capped at the same maximum number of attempts.

\paragraph{Evaluation metrics.}
On SRDD we follow~\citet{qian2024chatdev} and report \textbf{Exec}~$(\uparrow)$, the fraction of generated codebases that compile and run end-to-end, and \textbf{ECI}~$(\uparrow)$, a $[0,1]$-valued composite over completeness, executability, and consistency, summarised as the per-sample mean (ECI Mean) and product (ECI Product). On SciCode we follow the official protocol~\citep{tian2024scicode} and report \textbf{Step Pass}~$(\uparrow)$ (micro step-level pass rate across all subproblems), \textbf{Mean Step Pass}~$(\uparrow)$ (per-problem step accuracy averaged across problems), and \textbf{Problem Pass}~$(\uparrow)$ (fraction of problems for which all steps pass).

Detailed experimental configurations are included in Appendix~\ref{app:exp-setup}.

\subsection{Main Results}

Table~\ref{tab:main_results} reports two comparisons. The \emph{All} set scores the inference performance of methods on the full benchmark without training; the \emph{Test} split is held out from \textbf{MoRSE}'s training for direct comparison against baselines, where \textbf{MoRSE} attains the column-best score on every reported test cell.

SciCode provides both process-level performance (Step Pass, Mean Step Pass) and end-to-end success (Problem Pass); SRDD complements this with overall quality aligning with the task requirements (ECI: completeness and task-description alignment given executability). On the All set, the untrained \textbf{MoRSE}$_{\text{base}}$ (Sec.~\ref{subsec:tomas}) achieves the best overall performance comparing with all baselines, showing that the role-subtask DAG decomposition framework adds value without any training. On the held-out Test split, the trained \textbf{MoRSE} further lifts \textbf{MoRSE}$_{\text{base}}$ on all metrics, confirming the value of specialized parameter-level adaptation. 

As a side observation, prior MAS without task-specific adaptation can lag below the single-agent reference, and the gap widens on stronger backbones whose base is near-saturating.
Also, we observe that the post-training uplift is smaller on the strongest
backbone (Gemma-4-31B), where the base model is closer to saturation and, with the
LoRA configuration fixed across backbones, the trainable fraction shrinks
with model width ($0.170\%$ on Qwen3-4B, $0.109\%$ on Llama-3.1-8B, $0.048\%$
on Gemma-4-31B), making the adaptation relatively less
expressive~\citep{zeng2024expressive}. The structure and the training remain
complementary there, as the untrained framework sometimes falls slightly
below the single agent on the held-out split while training recovers the gap
and matches or exceeds it on every metric.

\emph{To conclude, on both process and end-to-end aspects, the role-subtask DAG framework improves inference performance at the task-structure level, and specialized adaptation provides a further substantial uplift at the parameter level.}

\subsection{Deeper Analysis}
\label{subsec:deeper-analysis}

\paragraph{Ablation study.}
Table~\ref{tab:main_results} shows the standalone effect of the role-subtask DAG framework (Sec.~\ref{subsec:tomas}) via \textbf{MoRSE}$_{\text{base}}$. Table~\ref{tab:ablation} ablates the remaining two core modules: (i) the \emph{MoLE architecture} (Sec.~\ref{subsec:mole}) and (ii) \emph{HGRPO training} (Sec.~\ref{subsec:hgrpo}). 

\textbf{C vs.\ D} (fixed MoLE, swap standard GRPO $\to$ HGRPO) isolates the training: HGRPO substantially recovers and extends the gain, directly verifying the hierarchical-credit design. \textbf{B vs.\ C} (under standard GRPO) reveals the converse: plugging in MoLE without hierarchical credit is fragile, indicating that experts and router can be challenging to jointly optimize under a standard optimization method.
The training dynamics behind this fragility, where the standard-GRPO
surrogate destabilizes while HGRPO remains controlled across epochs, are
shown in Appendix~\ref{app:hgrpo-dynamics}.
\textbf{A vs.\ D} then confirms the joint effect with substantial gains across all metrics; the gain is further validated by \textbf{B vs.\ D}, where \textbf{MoRSE} surpasses a non-routing fixed-LoRA baseline trained with standard GRPO whose rank is matched to \textbf{MoRSE}'s activated parameters per call.

Together, \emph{MoLE and HGRPO are tightly coupled and both are required to deliver the full improvement.}

A detailed experimental analysis for each module is provided in Appendix~\ref{app:detailed-ablation}. 

\begin{table*}[!htb]
\centering
\small
\setlength{\tabcolsep}{3.5pt}
\caption{Ablation study on SRDD and SciCode.}
\label{tab:ablation}
\resizebox{\textwidth}{!}{
\begin{tabular}{l|l|l|CCC|CCC}
\toprule
\textbf{Variant} & \textbf{Adapter} & \textbf{Training}
& \multicolumn{3}{c|}{\textbf{SRDD}}
& \multicolumn{3}{c}{\textbf{SciCode}} \\
\cmidrule(lr){4-6}\cmidrule(lr){7-9}
 & & & {\scriptsize\shortstack{\textbf{Exec}\\\textbf{(\%)}}} & {\scriptsize\shortstack{\textbf{ECI}\\\textbf{Mean}}} & {\scriptsize\shortstack{\textbf{ECI}\\\textbf{Product}}}
 & {\scriptsize\shortstack{\textbf{Step}\\\textbf{Pass (\%)}}} & {\scriptsize\shortstack{\textbf{Mean Step}\\\textbf{Pass (\%)}}} & {\scriptsize\shortstack{\textbf{Problem}\\\textbf{Pass (\%)}}} \\
\midrule
\multicolumn{9}{c}{\cellcolor{gray!15}\textbf{Qwen3-4B-Instruct}} \\
\midrule
A.\ MoRSE w/o MoLE \& HGRPO (\textbf{MoRSE}$_{\text{base}}$)                                            & \multicolumn{1}{c}{-} & \multicolumn{1}{c}{-} & 72.50 & 0.693 & 0.268 & 20.41 & 19.65 & 0.00 \\
B.\ MoRSE w/o MoLE                & fixed LoRA         & standard GRPO      & 82.50 & 0.745 & 0.329 & 25.00 & 20.25 & 5.00 \\
C.\ MoRSE w/o HGRPO                                      & MoLE               & standard GRPO      & 73.75 & 0.696 & 0.257 & 23.00 & 29.12 & 10.00 \\
\rowcolor{gray!15}
D.\ \textbf{MoRSE}                                          & \textbf{MoLE}      & \textbf{HGRPO} & \textbf{86.25} & \textbf{0.755} & \textbf{0.360} & \textbf{29.00} & \textbf{32.50} & \textbf{10.00} \\
\midrule
\multicolumn{9}{c}{\cellcolor{gray!15}\textbf{Llama-3.1-8B-Instruct}} \\
\midrule
A.\ MoRSE w/o MoLE \& HGRPO (\textbf{MoRSE}$_{\text{base}}$)                                            & \multicolumn{1}{c}{-} & \multicolumn{1}{c}{-} & 78.75 & 0.707 & 0.281 & 18.37 & 14.92 & 0.00 \\
B.\ MoRSE w/o MoLE                & fixed LoRA         & standard GRPO      & 82.50 & 0.737 & 0.326 & 23.00 & 14.00 & 0.00 \\
C.\ MoRSE w/o HGRPO                                      & MoLE               & standard GRPO      & 72.50 & 0.698 & 0.316 & 18.00 & 14.64 & 0.00 \\
\rowcolor{gray!15}
D.\ \textbf{MoRSE}                                          & \textbf{MoLE}      & \textbf{HGRPO} & \textbf{87.34} & \textbf{0.767} & \textbf{0.389} & \textbf{26.00} & \textbf{15.50} & \textbf{5.00} \\
\bottomrule
\end{tabular}
}
\end{table*}

\ensureroom{0.17\textheight}
\paragraph{Parameter-level heterogeneity gain from MoLE.}
\begin{wrapfigure}{r}{0.3\linewidth}
\centering
\includegraphics[width=\linewidth]{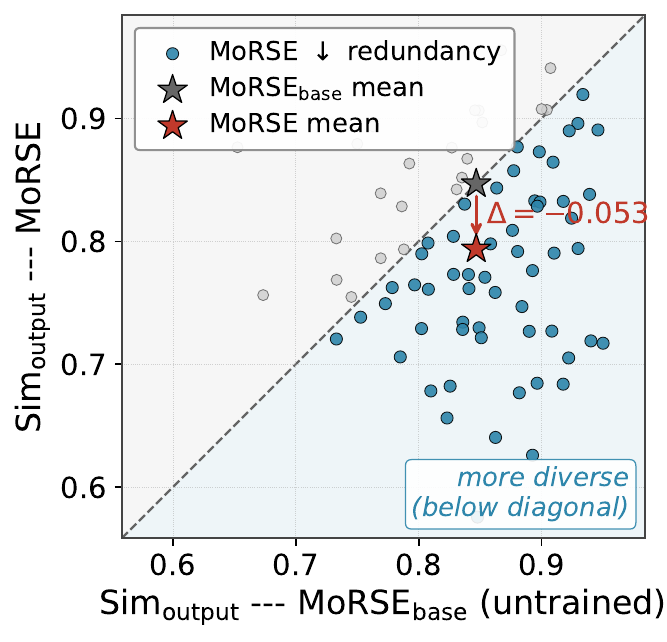}
\caption{Output $\mathrm{Sim}$.}
\label{fig:mpc_three_panel}
\vspace{-20pt}
\end{wrapfigure}
\looseness=-1
Sec.~\ref{subsec:prelim-redundancy}'s prelim diagnosed a heterogeneity deficit at the prompt level (role-prompted MAS); the role-subtask DAG framework alone mitigates it to some extent. Figure~\ref{fig:mpc_three_panel} shows that MoLE further reduces residual node-output redundancy at the \emph{parameter} level: trained dual role$\times$subtask LoRA experts shift the per-sample $\mathrm{Sim}_{\mathrm{output}}$ distribution toward more diversity, with the majority of samples landing below the diagonal, letting each node speak in its own voice rather than echoing its neighbours.

\paragraph{Out-of-distribution (OOD) generalization.}
A central premise of MoLE is that role-subtask decomposition exposes potentially \emph{generalizable} units---different tasks may share recurring subtask skills, so LoRA experts learned on training categories have the potential to compose for unseen ones.  We test this on category-level held-outs: SRDD withholds $8$ of $40$ categories; SciCode trains on Physics$+$Math$+$Material~Science and tests on Chemistry$+$Biology.

\begin{wrapfigure}{r}{0.7\linewidth}
\vspace{-12pt}
\centering
\includegraphics[width=\linewidth]{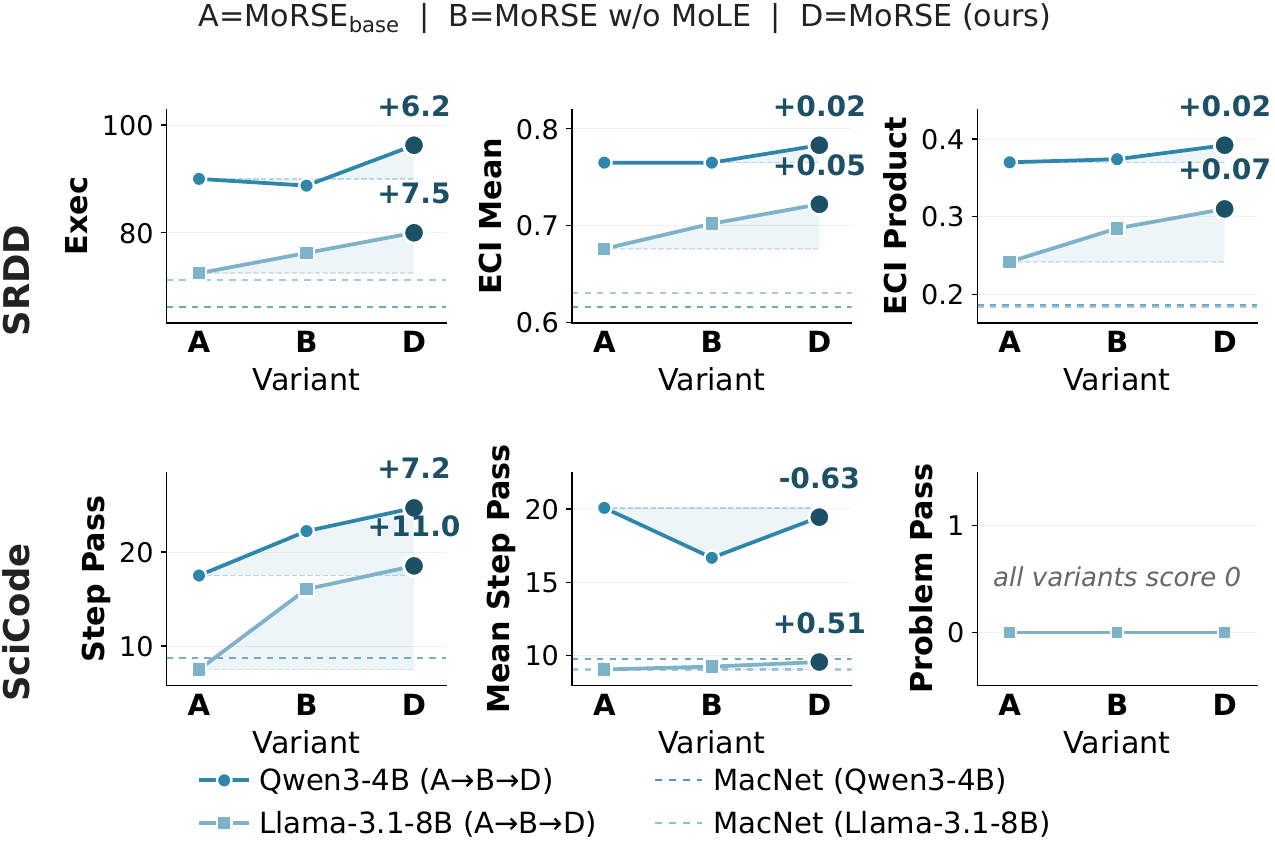}
\caption{\textbf{OOD generalization} experiments for two backbones.
Bold numbers above D are $\Delta$ vs.\ A. Per-metric tables are in
Appendix~\ref{app:ood-tables}.
}
\label{fig:ood}
\vspace{-12pt}
\end{wrapfigure}
\noindent Figure~\ref{fig:ood} contrasts four configurations on the OOD test: prior role-based DAG MAS \textsc{MacNet}, our untrained \textbf{MoRSE}$_{\text{base}}$ (A), a standard non-routing LoRA with GRPO fine-tuning baseline (B = \textbf{MoRSE} w/o MoLE), and the full trained \textbf{MoRSE} (D). Two findings emerge across both backbones and both datasets. \textbf{(i)} \emph{Training transfers to OOD}: trained \textbf{MoRSE} consistently improves over the untrained \textbf{MoRSE}$_{\text{base}}$, indicating that subtask-level skills learned on one category help held-out ones. \textbf{(ii)} \emph{Dynamic role-subtask routing matters at OOD}: \textbf{MoRSE} also surpasses the standard fine-tuning baseline, indicating that dynamic expert routing generalizes to held-out subtasks better than a fixed LoRA.

\section{Related work}

\paragraph{LLM-based multi-agent systems.}
\looseness=-1
Existing MAS coordinate cooperating agents via prompting protocols~\citep{wu2024autogen,hong2024metagpt,qian2024chatdev,du2023debate,qian2024macnet}, recently with dependency-aware DAGs~\citep{wu2025gap}; ToMAS instead exposes per-node $(\text{role}, \text{subtask})$ structure as the substrate for downstream parameter-level adaptation in MoLE.

\paragraph{Mixture of LoRA experts.}
Mixture-of-LoRA-Experts combines parameter-efficient adapters with sparse gating~\citep{dou2024loramoe}, and recent GRPO variants~\citep{shao2024deepseekmath} apply policy-gradient RL to MoE/LoRA-MoE under a single shared advantage; HGRPO instead performs a \emph{bi-level} credit decomposition with within-route and cross-route conditional baselines, sharing the same expected gradient~\citep{greensmith2004variance} but with strictly lower variance at finite group sizes (Prop.~\ref{prop:hgrpo-variance}).

\paragraph{RL for multi-agent LLM systems.}
Recent multi-agent LLM RL~\citep{moragent,stronger_mas,marti,marft} operates on static topologies and parameterizes only the role axis under a single shared advantage; a parallel line extends GRPO along the rollout-time axis~\citep{hgpo2026,gigpo2025}. MoRSE differs along three orthogonal axes---per-instance dynamic DAGs, dual role-subtask LoRA factorization on a shared backbone, and structural-axis credit decomposition via HGRPO's bi-level baselines---and is composable with rollout-time hierarchy.

An extended discussion of the above directions is provided in App.~\ref{app:related-work}.

\section{Conclusion}
We presented \textbf{MoRSE}, a task-oriented multi-agent system that addresses the heterogeneity bottleneck of role-prompted MAS via \emph{(role, subtask)-conditional} specialization at both task-structure and parameter levels. ToMAS decomposes each task into a dependency-aware DAG with per-agent (role, subtask) labels for task-structure-level specialization; MoLE attaches a dynamic mixture of role and subtask LoRA experts with a prototype-based semantic router on a shared backbone, addressing the diverse role and subtask demands that a single shared base model cannot satisfy; HGRPO stably co-optimizes experts and router via two-layer credit assignment that disentangles expert quality from routing quality. Experiments across three diverse backbones show notable whole-task and step-wise improvements, with better generalization across held-out task categories and domains.

\bibliographystyle{plainnat}
\bibliography{ref_checked}

\newpage
\appendix

\section{Notation}
\label{app:notation}

We summarize the main notation used throughout Sections~\ref{sec:preliminary}--\ref{sec:method}.

\begin{table}[!htb]
\centering
\small
\begin{tabular}{ll}
\toprule
\textbf{Symbol} & \textbf{Meaning} \\
\midrule
\multicolumn{2}{l}{\textit{Task structure and DAG (Sec.~\ref{sec:problem}, Sec.~\ref{subsec:tomas})}} \\
$\mathcal{T}$ & Complex task instance (input to the planner) \\
$\mathcal{G} = (V, E)$ & Per-instance dependency-aware DAG \\
$v_i \in V$ & Subtask node \\
$(v_i, v_j) \in E$ & Data dependency: subtask $j$ requires artifact $y_i$ \\
$\mathrm{Pa}(i)$ & Parent set of node $v_i$ \\
$r_i \in \mathcal{R}$ & Role of subtask $i$ (e.g., \texttt{execute}, \texttt{merge}) \\
$s_i$ & Natural-language subtask description \\
$y_i$ & Artifact produced at node $v_i$ \\
$\mathcal{C}_i$ & Merged context from upstream artifacts \\
\midrule
\multicolumn{2}{l}{\textit{Verifier and reward (Sec.~\ref{subsec:tomas})}} \\
$g_i \in \{0, 1\}$ & Verifier admissibility flag at node $v_i$ \\
$u_i$ & Step-level scalar reward at node $v_i$ \\
\midrule
\multicolumn{2}{l}{\textit{Backbone and experts (Sec.~\ref{sec:problem}, Sec.~\ref{subsec:mole})}} \\
$\theta_0$ & Frozen base language model parameters \\
$\Phi^r, \Phi^s$ & Pools of role and subtask LoRA experts \\
$\mathcal{E}_i \subset \Phi^r \cup \Phi^s$ & Active expert set at node $v_i$ \\
$e(r) \in \Phi^r$ & Role-to-expert mapping \\
$\theta_i$ & Agent $a_i$'s effective parameters \\
\midrule
\multicolumn{2}{l}{\textit{Router (Sec.~\ref{subsec:mole})}} \\
$\pi_\psi$ & Prototype-based semantic router policy \\
$h_\eta(s_i)$ & Subtask embedding \\
$P$ & Learnable prototype matrix \\
$\psi = (\eta, P)$ & Router parameters \\
\midrule
\multicolumn{2}{l}{\textit{HGRPO (Sec.~\ref{subsec:hgrpo})}} \\
$B$ & Number of sampled routes per node \\
$M$ & Number of candidates per route group \\
$\mathcal{E}_i^{(b)}$ & $b$-th sampled expert combination \\
$\mu^{(b)}, \sigma^{(b)}$ & Within-route mean and std of rewards \\
$\bar\mu$ & Across-route baseline \\
$A_{\text{within}}^{(b, m)}, A_{\text{cross}}^{(b)}$ & Within-route and cross-route advantages \\
$\mathcal{L}_{\text{LoRA}}, \mathcal{L}_{\text{router}}$ & LoRA and router losses \\
\bottomrule
\end{tabular}
\end{table}

\section{Related Work}
\label{app:related-work}

\subsection{LLM-based Multi-Agent System}
\looseness=-1
LLM-based multi-agent systems decompose complex tasks across cooperating agents coordinated by prompting protocols~\citep{guo2024largelanguagemodelbased}. AutoGen~\citep{wu2024autogen} provides a general conversational framework; MetaGPT~\citep{hong2024metagpt} and ChatDev~\citep{qian2024chatdev} standardise software-engineering roles; multi-agent debate~\citep{du2023debate} improves reasoning through interaction; and MacNet~\citep{qian2024macnet} scales cooperation to large agent populations. These frameworks typically rely on frozen backbones and handcrafted prompts, leaving the question of how to \emph{learn} agent-specific parameters largely open. MoRSE addresses this gap by equipping each agent with role- and subtask-conditioned LoRA~\citep{hu2021loralowrankadaptationlarge} experts that are jointly optimised together with a dynamic prototype-based~\citep{snell2017prototypicalnetworksfewshotlearning} semantic subtask router.

\paragraph{DAG-based MAS frameworks.}
A recent line of MAS work uses dependency-aware DAG decomposition to structure multi-agent collaboration: TDP~\citep{tdp2026} decouples planning from execution via per-instance sub-goal DAGs; GAP~\citep{wu2025gap} trains agent foundation models to construct sub-task graphs for parallel tool execution; and STACKPLANNER~\citep{stackplanner2026} adopts hierarchical task decomposition with task-experience memory management. Our ToMAS adopts a similar DAG abstraction but with a fundamentally different \emph{purpose}: rather than serving planning decoupling or parallel execution alone, ToMAS exposes per-node $(\text{role}, \text{subtask})$ structure as the substrate for parameter-level adaptation in MoLE, and adds requirement-coverage checking and step-level rule-based verification to make the per-node structure reliable enough for downstream parameter learning.

\subsection{Mixture of LoRA Experts}
\paragraph{Architectures.}
Mixture-of-LoRA-Experts combines parameter-efficient adapters with sparse gating~\citep{shazeer2017outrageouslylargeneuralnetworks}. LoRAMoE~\citep{dou2024loramoe} and Mixture-of-LoRAs~\citep{feng2024mixlora} show that multi-task instruction tuning benefits from per-task LoRA experts; MoLA~\citep{gao2024mola} allocates experts layer-wise; PESC~\citep{wu2024pesc} sparsifies a dense model into an MoE for instruction tuning; and hierarchical MoE~\citep{hmoe2025} introduces multi-level gating to ease routing complexity. These works are structural: training credit is still assigned via a single supervised loss or a single unified advantage.

\paragraph{Reinforcement fine-tuning for MoE / MoLE.}
A recent line applies policy-gradient RL~\citep{schulman2017proximalpolicyoptimizationalgorithms}---notably GRPO~\citep{shao2024deepseekmath}---to MoE and LoRA-MoE backbones. RO-GRPO~\citep{rogrpo2026} turns routing statistics into an auxiliary reward term to prevent expert collapse; RSPO~\citep{rspo2025} down-weights tokens with high router drift via a router-shift importance ratio; R3~\citep{r3moe2025} replays inference-time routing masks during training to close the train--inference gap; MoE-GRPO~\citep{moegrpo2026} casts expert selection as a sequential decision process in vision-language models; and PA-MoE~\citep{pamoe2026} assigns disjoint LoRA experts to task phases under a unified policy objective. \emph{All of these approaches update the router and the experts under a single shared advantage (or only shape the reward), and none decompose credit along the router--expert axis.} In contrast, our HGRPO performs a \emph{bi-level} credit decomposition: a within-route conditional baseline $(\mu_r,\sigma_r)$ for the LoRA policy and an across-route baseline $\bar\mu$ for the router. The two estimators share the same expected gradient by the conditional-baseline argument of~\citet{greensmith2004variance}, but HGRPO has strictly lower variance at finite group sizes---consistent with what we observe in the C-vs.-D ablation of Table~\ref{tab:ablation} in the main text.

\subsection{Reinforcement Learning for Multi-Agent LLM Systems}
\looseness=-1
Recent work has explored reinforcement learning and adaptation mechanisms for LLM-based multi-agent systems.
MoRAgent introduces role-conditioned LoRA adapters under single-agent training~\citep{moragent}, but does not consider multi-agent reinforcement learning or subtask-level credit assignment.
Stronger-MAS adapts group-based policy optimization to multi-agent settings via AT-GRPO, defining groups along the \emph{temporal} axis $(\text{env}, \text{agent}, \text{turn})$ with a single shared baseline~\citep{stronger_mas}; MARTI provides a unified training-inference framework that supports multiple full-parameter RL algorithms with trajectory-level group-relative advantages~\citep{marti}. Both, however, operate on \emph{static, predefined} collaboration topologies (fixed role pairs or workflow templates), parameterize agent heterogeneity at the level of full per-role LLMs, and assign credit without an explicit router--expert decomposition.
The closest exception, MARFT~\citep{marft}, formulates a Flex-MG DAG abstraction and adopts per-role LoRA, but parameterizes only the role axis, exercises its DAG abstraction in practice as a fixed agent pipeline rather than per-instance dependency graphs, and assigns credit under a single shared advantage with no router--expert decomposition. MoRSE, in contrast, executes a \emph{per-instance dynamic DAG} (whose node count, edges, and $(\text{role}, \text{subtask})$ labels are produced by the ToMAS planner per task) in topological order during training, supplying subtask-level credit signals to HGRPO.
In contrast, MoRSE (i) trains on \emph{per-instance dependency-aware DAGs} with open-ended subtask labels, (ii) factorizes parameters along \emph{both role and subtask} axes via prototype-routed LoRA experts on a shared frozen backbone, and (iii) decomposes RL credit along the \emph{router--expert structural axis} via HGRPO's bi-level conditional baselines---an axis orthogonal to AT-GRPO's $(\text{agent}, \text{turn})$ grouping and thus composable with it.

\paragraph{Hierarchical and step-level credit in GRPO.}
A parallel line extends GRPO with finer credit along the \emph{temporal} axis of agent rollouts: HGPO~\citep{hgpo2026} introduces context-aware hierarchical grouping over rollout steps for long-horizon agents; GiGPO~\citep{gigpo2025} groups anchor states across trajectories to estimate per-step advantages; GRPO-$\lambda$~\citep{grpolambda2025} refines credit at the token level; and execution-grounded schemes~\citep{execgrpo2026} exploit code-execution feedback as a step-level signal. Despite the naming overlap with HGRPO, these methods introduce hierarchy along the rollout-time axis, whereas HGRPO operates on the orthogonal \emph{structural} axis of router versus experts within a single decoding step, and can in principle be combined with any of them.

\section{Limitations and Future Work}
\label{app:limitations}

\paragraph{Limitations.}
\looseness=-1
Empirically, our evaluation covers code-generation benchmarks; broader task
families and substantially larger backbones remain to be tested.
Architecturally, the default subtask-expert pool size is intentionally
conservative, and we currently instantiate only two general roles,
\texttt{execute} and \texttt{merge}, without finer-grained roles such as
\texttt{review}, \texttt{reasoning}, or \texttt{refine}. Two further
limitations concern the evaluation protocol. First, the SRDD training reward
and the ECI evaluation metric draw on shared component sources, although
their weights and implementations differ (Appendix~\ref{app:exp-setup});
an execution-verified independent evaluation corroborates the reported
gains (Appendix~\ref{app:detailed-ablation}). Second, most reported results,
such as those in Table~\ref{tab:main_results}, are computed under a single
seed.

\paragraph{Future work.}
Scaling the subtask-expert pool---especially when paired with adaptive
per-instance allocation---should further improve coverage of heavy-tailed
subtask distributions and strengthen generalization across diverse tasks;
the pool is also a natural carrier for modality-specific experts in
multi-modal settings, where each subtask can route to the modality expert(s)
matching the input it needs to process. The framework is likewise compatible
with richer role taxonomies---each new role corresponds to an additional
role expert in $\Phi^r$ under the same MoLE composition. Finally,
parameter-level heterogeneity is orthogonal to, and composable with,
prompt-, tool-, and context-boundary-level differentiation: differentiated
tools and context accesses present exactly the divergent contexts that
demand differentiated parameters, and the expert pool and router can
condition on each agent's tool or action-space context in the same way they
condition on $(\text{role}, \text{subtask})$. Therefore, extending \textbf{MoRSE}
to such tool-use and broader agentic settings is a natural next step.

\section{Broader Impacts}
\label{app:broader-impacts}

MoRSE targets complex tasks that decompose into multiple interdependent subtasks, a setting that recurs across software engineering, scientific computing, data analysis, document and knowledge work, and decision support. By improving the capability and sample efficiency of LLM-based multi-agent systems on such tasks, the framework can lower the cost of deploying autonomous assistants in domains where progress currently depends on careful human orchestration of subtask pipelines. The work introduces no new datasets or pre-trained foundation models and does not pose specific risks beyond those already inherent to the open base models it builds on.

\section{Implementation Details}
\label{app:impl-details}

We instantiate \textbf{MoRSE} on three backbones spanning different families and sizes---Qwen3-4B-Instruct, Llama-3.1-8B-Instruct, and Gemma-4-31B-IT---as the shared backbone $\theta_0$. The planner uses the same backbone with a structured-output prompt to produce the per-instance DAG.

\paragraph{LoRA experts.} Each expert injects rank-$\rho$ LoRA pairs at
the query, value, and output projections of self-attention in the last
$L_{\text{adapt}} = 8$ transformer blocks, with $\rho = 8$ and scaling
$\alpha = 16$. The role expert pool $\Phi^r$ contains
$|\mathcal{R}| = 2$ experts (one each for \texttt{execute} and
\texttt{merge}); the subtask expert pool $\Phi^s$ contains $K_s = 4$
experts.

\paragraph{Router.} The subtask embedder mean-pools the frozen input
token embeddings of $s_i$ (truncation length $64$), applies a trainable
projection to dimension $D = 128$, and $\ell_2$-normalizes. The router
selects $K = 2$ subtask experts per agent call (top-$K$ stochastic
sampling during training, greedy at inference).

\paragraph{HGRPO training.} Each node samples $B = 4$ expert routes from $\pi_\psi$, with $M = 4$ candidates per route. Advantages are normalized by the within-route standard deviation and clipped to $\pm 5$; updates are skipped when the within-group reward std falls below $5{\times}10^{-4}$. The router gradient is scaled by $\alpha_\pi = 0.15$ relative to the LoRA gradient. Learning rates are $8{\times}10^{-5}$ (LoRA experts) and $3{\times}10^{-5}$ (router), optimized with AdamW. Sampling uses top-$p = 0.95$.

\paragraph{HGRPO instantiation across roles.}
The two-layer credit assignment in Sec.~\ref{subsec:hgrpo} applies in full to Executor calls, where the prototype router selects subtask experts and contributes a routing log-likelihood. For Merger calls, expert combinations are inherited from upstream Executors (Sec.~\ref{subsec:mole}) without any router decision, so the cross-route advantage drops out and only the within-route advantage $A_{\text{within}}^{(b,m)}$ updates the merge-role expert through $\log p(\mathcal{C}_i^{(b,m)} \mid r, \mathcal{T}, \{y_j\}_{j\in\mathrm{Pa}(i)};\, \theta^{r}_i),\ r{=}\texttt{merge}$.

\paragraph{Reward instantiations.}
\label{app:reward-inst}
At each DAG node, the rule-based verifier returns a binary admissibility flag $g_i \in \{0, 1\}$ together with a scalar score $u_i$ aggregating task-appropriate metrics, with $u_i = 0$ whenever $g_i = 0$. Inadmissible artifacts ($g_i = 0$) trigger localised regeneration with diagnostic feedback before downstream propagation, preventing error compounding across the DAG. We instantiate the scorer $u_i$ as follows.

\textit{SRDD-style code tasks.} Let $S$ denote the materialized repository artifact. We apply a hard gate that checks basic validity (e.g., presence of an entry point and a short smoke test). Conditioned on passing the gate, we compute a scalar reward as a weighted sum of task-relevant components,
\begin{equation}
R_{\text{code}}(\mathcal{T}, S) = w_{\text{exec}}\, r_{\text{exec}}(S) + w_{\text{comp}}\, r_{\text{comp}}(S) + w_{\text{cons}}\, r_{\text{cons}}(\mathcal{T}, S),
\end{equation}
where $r_{\text{exec}}$ measures executability, $r_{\text{comp}}$ measures completeness, and $r_{\text{cons}}$ measures semantic consistency between the task description and the produced codebase.

\textit{SciCode-style scientific computing tasks.} Each subtask is paired with a set of executable unit tests $\mathcal{U}_i$. The admissibility flag $g_i$ requires the generated code to parse and expose the specified function signature; conditioned on passing the gate, the scalar reward is the per-subtask test pass rate
\begin{equation}
R_{\text{sci}}(y_i, \mathcal{U}_i) = \frac{|\{t \in \mathcal{U}_i : y_i \text{ passes } t\}|}{|\mathcal{U}_i|},
\end{equation}
which aligns directly with the official Step Pass metric.

\paragraph{Compute resources.} SRDD and SciCode training on Qwen3-4B and Llama-3.1-8B uses 4$\times$H200 (141\,GB HBM each) and 4$\times$H100 (80\,GB HBM each) respectively; Gemma-4-31B training uses 4$\times$B200 (192\,GB HBM each). Each epoch takes approximately 6--12 hours; small-backbone runs use 5--10 epochs, while Gemma-4-31B uses 1--2 epochs. Per-run on-disk footprint (LoRA checkpoints, rollouts, evaluation artefacts) is under 200\,GB.

\section{Training Algorithm for MoRSE}
\label{app:training-alg}

Algorithm~\ref{alg:morse-train} summarizes one HGRPO update for a task instance $\mathcal{T}$. ToMAS first decomposes $\mathcal{T}$ into a DAG; for each executor node, the router samples $B$ expert routes and the policy generates $M$ candidates per route, after which the rule-based verifier yields step rewards $u^{(b,m)}$. HGRPO then forms hierarchical advantages (Eq.~\eqref{eq:hgrpo-adv}): within-route advantages drive the LoRA-expert gradient and cross-route advantages drive the router gradient, as derived in Sec.~\ref{subsec:hgrpo}. Merger nodes follow the same loop but only update the merge role expert with no subtask routing (see Sec.~\ref{subsec:tomas} and Appendix~\ref{app:reward-inst}).

\begin{algorithm}[!t]
\caption{MoRSE training step (per task instance $\mathcal{T}$).}
\label{alg:morse-train}
\begin{algorithmic}[1]
\REQUIRE Frozen backbone $\theta_0$; LoRA experts $\Phi = \Phi^r \cup \Phi^s$; prototype router $\pi_\psi$; rule-based verifier $\mathcal{V}$; group sizes $(B, M)$; PPO clip $\epsilon_{\text{clip}}$; step size $\eta$.
\STATE Decompose via ToMAS: $G = (V, E),\ \{(r_i, s_i, \mathrm{Pa}(i))\}_{i \in V} \leftarrow \mathrm{ToMAS}(\mathcal{T})$ \hfill\COMMENT{Sec.~\ref{subsec:tomas}}
\STATE Initialize gradient accumulators $g^{\Phi} \leftarrow 0$,\ $g^{\psi} \leftarrow 0$
\FOR{each node $v_i \in V$ in topological order}
    \STATE Build context $\mathcal{C}_i \leftarrow \{y_j\}_{j \in \mathrm{Pa}(i)}$
    \FOR{route $b = 1, \dots, B$}
        \STATE Sample expert subset $\mathcal{E}_{s,i}^{(b)} \sim \pi_\psi(\cdot \mid s_i)$ \hfill\COMMENT{top-$K$ stochastic during training; greedy at inference}
        \STATE $\theta_i^{(b)} \leftarrow \theta_0 + \Delta\Phi\big(\{e(r_i)\} \cup \mathcal{E}_{s,i}^{(b)}\big)$
        \FOR{candidate $m = 1, \dots, M$}
            \STATE Sample $y^{(b,m)} \sim p(\cdot \mid \mathcal{T}, s_i, \mathcal{C}_i;\, \theta_i^{(b)})$
            \STATE Compute step reward $u^{(b,m)} \leftarrow \mathcal{V}(y^{(b,m)}, s_i)$
        \ENDFOR
        \STATE Within-route stats: $\mu^{(b)} \leftarrow \tfrac{1}{M}\sum_m u^{(b,m)}$,\ $\sigma^{(b)} \leftarrow \mathrm{std}_m\big(u^{(b,m)}\big)$
    \ENDFOR
    \STATE Cross-route stats: $\bar\mu \leftarrow \tfrac{1}{B}\sum_b \mu^{(b)}$,\ $\sigma_C \leftarrow \mathrm{std}_b\big(\mu^{(b)}\big)$
    \STATE $A_{\text{within}}^{(b,m)} \leftarrow \big(u^{(b,m)} - \mu^{(b)}\big) / (\sigma^{(b)} + \epsilon)$ \hfill\COMMENT{LoRA-expert advantage}
    \STATE $A_{\text{cross}}^{(b)} \leftarrow \big(\mu^{(b)} - \bar\mu\big) / (\sigma_C + \epsilon)$ \hfill\COMMENT{router advantage}
    \STATE $g^{\Phi} \mathrel{+}= \nabla_\Phi\, J^{\text{PPO-clip}}_{\epsilon_{\text{clip}}}\!\big(\{A_{\text{within}}^{(b,m)}\};\, \pi_{\theta_i^{(b)}}\big)$
    \STATE $g^{\psi} \mathrel{+}= \nabla_\psi\, J^{\text{PPO-clip}}_{\epsilon_{\text{clip}}}\!\big(\{A_{\text{cross}}^{(b)}\};\, \pi_\psi(\cdot\mid s_i)\big)$
    \STATE Select one $y_i \in \{y^{(b,m)}\}$ for downstream context propagation \hfill\COMMENT{e.g., highest-reward route}
\ENDFOR
\STATE $\Phi \leftarrow \Phi + \eta\, g^{\Phi}$;\quad $\psi \leftarrow \psi + \eta\, g^{\psi}$
\ENSURE Updated $(\Phi, \psi)$.
\end{algorithmic}
\end{algorithm}

\section{HGRPO Variance Reduction: Proposition and Proof}
\label{app:hgrpo-variance-proof}

We formalise the gradient-variance reduction of the two-layer credit assignment in Sec.~\ref{subsec:hgrpo} relative to standard GRPO, which applies the empirical mean $\bar u = \tfrac{1}{BM}\sum_{b,m} u_i^{(b,m)}$ as the only baseline for both expert and router updates.

\paragraph{Setup.}
At a fixed node $v_i$, treat the route index $b \in \{1, \ldots, B\}$ and the candidate index $m \in \{1, \ldots, M\}$ as random draws (routes from $\pi_\psi(\cdot \mid s_i)$, candidates from the policy under $\mathcal{E}_i^{(b)}$). Let $u^{(b,m)} := u_i^{(b,m)}$ for brevity. Define the conditional baselines $\mu^{(b)} := \mathbb{E}_m[u \mid b]$ and $\bar\mu := \mathbb{E}_b[\mu^{(b)}]$, with the law-of-total-variance decomposition
\[
\sigma_W^2 := \mathbb{E}_b[\mathrm{Var}_m(u \mid b)],
\qquad
\sigma_C^2 := \mathrm{Var}_b[\mu^{(b)}],
\qquad
\mathrm{Var}(u) = \sigma_W^2 + \sigma_C^2.
\]

\paragraph{Proposition (informal).}
The HGRPO and standard GRPO gradient estimators in Eqs.~\eqref{eq:lora-loss}--\eqref{eq:router-loss} share the same expected gradient. The variance reduction, however, differs between the two updates.

\textbf{LoRA-expert update (real reduction).} At finite group sizes $M, B$, the HGRPO LoRA-expert gradient has strictly lower per-term variance than standard GRPO, with reduction
\[
\mathrm{Var}\big[g^{\mathrm{LoRA}}_{\mathrm{single}}\big] - \mathrm{Var}\big[g^{\mathrm{LoRA}}_{\mathrm{HGRPO}}\big]
= \sigma_C^2 \cdot \mathbb{E}\big[\|\nabla_\Phi \log p\|^2\big] \;\ge\; 0.
\]
Because each candidate produces its own gradient $\nabla_\Phi \log p^{(b,m)}$, the per-term reduction $\sigma_C^2$ accumulates additively across all $BM$ candidates.

\paragraph{Proof sketch (LoRA-expert reduction).}
Unbiasedness follows from the standard score-function identity: for any baseline that is independent of the action conditional on the corresponding context, $\mathbb{E}[(R - \text{baseline}) \nabla \log p] = \mathbb{E}[R \nabla \log p]$ since $\mathbb{E}[\nabla \log p] = 0$. Both $\mu^{(b)}$ (constant in $m$ given $b$) and $\bar\mu$ (constant across actions) satisfy this requirement.

For the variance reduction, write
\[
g^{\mathrm{LoRA}}_{\mathrm{HGRPO}} = (u - \mu^{(b)}) \nabla_\Phi \log p,
\qquad
g^{\mathrm{LoRA}}_{\mathrm{single}} = (u - \bar\mu) \nabla_\Phi \log p,
\]
and observe $u - \bar\mu = (u - \mu^{(b)}) + (\mu^{(b)} - \bar\mu)$, where the two terms are uncorrelated (within-route vs.\ cross-route deviation, orthogonal under conditional expectation). Taking variances and using independence of the score function from the cross-route deviation $\mu^{(b)} - \bar\mu$ within the conditional-baseline regime~\citep{greensmith2004variance},
\[
\mathrm{Var}[g^{\mathrm{LoRA}}_{\mathrm{single}}] = \mathrm{Var}[g^{\mathrm{LoRA}}_{\mathrm{HGRPO}}] + \sigma_C^2 \cdot \mathbb{E}[\|\nabla_\Phi \log p\|^2].
\]

\paragraph{Remarks.}
(i) With the $\sigma^{(b)}$ normalisation included, the LoRA-expert variance reduction carries the reweighted factor made precise in Corollary~\ref{cor:hgrpo-variance-normalized} below. (ii) The reduction is largest when the cross-route reward variance $\sigma_C^2$ dominates---i.e., when the routed expert combinations differ substantially in quality, which is precisely the regime where credit assignment matters most.

\begin{corollary}[normalized form]
\label{cor:hgrpo-variance-normalized}
Under the conditions of Proposition~\ref{prop:hgrpo-variance}, with per-route
standardization, replacing the global baseline with the within-route baseline
reduces the per-term LoRA-expert gradient variance by
\begin{equation}
\mathbb{E}_b\!\left[\frac{(\mu^{(b)}-\bar\mu)^2}{(\sigma^{(b)}+\epsilon)^2}\right]\cdot\mathbb{E}\big[\|\nabla_\Phi \log p\|^2\big]\;\ge\;0.
\label{eq:hgrpo-variance-normalized}
\end{equation}
\end{corollary}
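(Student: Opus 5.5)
I will mirror the argument behind Proposition~\ref{prop:hgrpo-variance}, but apply it to standardized advantages. At a fixed node and a sampled route $b$, let $s_b := \sigma^{(b)}+\epsilon$ be the per-route scale, and write the score as $G := \nabla_\Phi \log p$. The proof compares two per-term estimators that share this scale:
\[
\tilde g_{\mathrm{HGRPO}} = \frac{u-\mu^{(b)}}{s_b}\,G,
\qquad
\tilde g_{\mathrm{single}} = \frac{u-\bar\mu}{s_b}\,G.
\]
The key identity is the same split as before:
\[
\frac{u-\bar\mu}{s_b} = \frac{u-\mu^{(b)}}{s_b} + \frac{\mu^{(b)}-\bar\mu}{s_b}.
\]
Conditional on $b$, the second summand is a route-level constant.

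\textbf{Unbiasedness.} As in the Proposition, I treat $s_b$ as fixed given $b$, which is exact in the population version where $\sigma^{(b)}$ is the true conditional standard deviation. Then $\mathbb{E}[G\mid b]=0$ implies
\[
\mathbb{E}\Big[\frac{\mu^{(b)}-\bar\mu}{s_b}\,G\Big] = 0 .
\]
So the two normalized estimators have the same mean. Their common mean is $\mathbb{E}[s_b^{-1}\,u\,G]$, which is the reweighted gradient that the implemented estimator actually targets.

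\textbf{Variance expansion.} Since the means agree, it suffices to compare second moments. Expanding $\mathbb{E}\|\tilde g_{\mathrm{single}}\|^2$ gives three terms:
\begin{itemize}
\item the term $\mathbb{E}\|\tilde g_{\mathrm{HGRPO}}\|^2$;
\item a cross term $2\,\mathbb{E}\big[s_b^{-2}(\mu^{(b)}-\bar\mu)(u-\mu^{(b)})\|G\|^2\big]$;
\item a route term $\mathbb{E}\big[s_b^{-2}(\mu^{(b)}-\bar\mu)^2\|G\|^2\big]$.
\end{itemize}
I will handle the last two under the same decoupling assumption the Proposition used, namely that $\|G\|^2$ is uncorrelated with the reward deviations given $b$:
\begin{itemize}
\item The cross term then factors through $\mathbb{E}[u-\mu^{(b)}\mid b]=0$ and vanishes.
\item The route term factors as $\mathbb{E}_b\big[(\mu^{(b)}-\bar\mu)^2/s_b^2\big]\cdot\mathbb{E}\|G\|^2$.
\end{itemize}
The second factor is what appears in \eqref{eq:hgrpo-variance-normalized}. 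Because it is a nonnegative expectation of squares, nonnegativity of the reduction follows immediately.

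\textbf{Main obstacle.} The hard step is justifying that $s_b$ may be treated as $b$-measurable. In the implementation $\sigma^{(b)}$ is the sample standard deviation over the same $M$ candidates, so it correlates with both $u-\mu^{(b)}$ and $G$. The same holds for the empirical $\mu^{(b)}$.

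I would first prove the identity for population statistics, and then bound the discrepancy using concentration of sample means and standard deviations. The resulting corrections are $O(1/M)$ for the within-route statistics and $O(1/B)$ for $\bar\mu$; $\epsilon>0$ keeps $1/s_b$ bounded, so these bounds remain valid. This yields the stated reduction "up to finite-sample corrections," matching how the main text frames the Corollary.
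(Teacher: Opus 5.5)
Your argument is essentially the paper's own bridge proof: the same shared route-constant scale $\sigma^{(b)}+\epsilon$, the same split of $(u-\bar\mu)/(\sigma^{(b)}+\epsilon)$ into within-route and cross-route parts, the cross term removed by within-route uncorrelatedness, and the empirical-statistic effects deferred to $O(1/M)$ corrections. One small fix: the route deviation is $b$-measurable, so an assumption stated only conditionally on $b$ says nothing about it; your route term is therefore $\mathbb{E}_b\bigl[(\mu^{(b)}-\bar\mu)^2(\sigma^{(b)}+\epsilon)^{-2}\,\mathbb{E}[\|\nabla_\Phi\log p\|^2\mid b]\bigr]\ge 0$ rather than the stated product form, which additionally requires $\mathbb{E}[\|\nabla_\Phi\log p\|^2\mid b]$ to be uncorrelated across routes with the normalized deviation (the paper's ``independence of the score function from the cross-route deviation''), so you should list that condition explicitly.
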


\paragraph{Bridge to the normalized estimator (proof of Corollary~\ref{cor:hgrpo-variance-normalized}).}
With per-route standardization, the two estimators become
$g^{\mathrm{LoRA}}_{\mathrm{HGRPO}} = \frac{u-\mu^{(b)}}{\sigma^{(b)}+\epsilon}\,\nabla_\Phi\log p$
and
$g^{\mathrm{LoRA}}_{\mathrm{single}} = \frac{u-\bar\mu}{\sigma^{(b)}+\epsilon}\,\nabla_\Phi\log p$,
both divided by the same route-constant scale $\sigma^{(b)}+\epsilon$.
Writing
$\frac{u-\bar\mu}{\sigma^{(b)}+\epsilon} = \frac{u-\mu^{(b)}}{\sigma^{(b)}+\epsilon} + \frac{\mu^{(b)}-\bar\mu}{\sigma^{(b)}+\epsilon}$,
the two terms remain uncorrelated under the same conditional-independence
conditions as in the unnormalized proof, and the shared within-route
component cancels in the variance difference. The reduction is therefore
carried by the second term alone,
\[
\mathrm{Var}\big[g^{\mathrm{LoRA}}_{\mathrm{single}}\big] -
\mathrm{Var}\big[g^{\mathrm{LoRA}}_{\mathrm{HGRPO}}\big]
= \mathbb{E}_b\!\left[\frac{(\mu^{(b)}-\bar\mu)^2}{(\sigma^{(b)}+\epsilon)^2}\right]\cdot
\mathbb{E}\big[\|\nabla_\Phi \log p\|^2\big] \;\ge\; 0,
\]
which is exactly Corollary~\ref{cor:hgrpo-variance-normalized}.

\looseness=-1
Two qualifications apply. Replacing population means with empirical group
statistics preserves both conclusions up to $O(1/M)$ finite-sample
corrections; equal expectation holds up to two benign scaling
effects, a $(1-1/M)$ factor absorbed by the learning rate and the empirical
per-route scale acting as an adaptive step size, both shared by
group-relative estimators including standard GRPO. Advantage clipping and
the low-variance-group skip are engineering stabilizers outside the
analysis.

\paragraph{Router update (scaling-equivalent at the batch level).}
The router score function $\nabla_\psi \log \pi_\psi(\mathcal{E}^{(b)})$ is constant in $m$, so the flat-GRPO batch-summed router gradient collapses to $M\sum_b(\hat\mu^{(b)} - \bar u)\nabla_\psi \log \pi^{(b)}$. Combined with $\bar u = \bar\mu$, this equals $M$ times the unnormalised HGRPO router gradient $\sum_b(\hat\mu^{(b)} - \bar\mu)\nabla_\psi \log \pi^{(b)}$; the two estimators therefore have identical signal-to-noise ratio at the batch level. The substantive contribution of HGRPO's per-route formulation is the $\sigma_C$ normalisation, providing a well-scaled router gradient that does not depend on the group size $M$.

\paragraph{Scope of the guarantees.}
In summary, the formal guarantees concern the LoRA-expert update: HGRPO
shares the expected gradient of standard GRPO and strictly reduces the
LoRA-expert gradient variance at finite group sizes
(Proposition~\ref{prop:hgrpo-variance}), a reduction that persists for the
implemented normalized estimator in the reweighted form of
Corollary~\ref{cor:hgrpo-variance-normalized}, up to $O(1/M)$ finite-sample
corrections. For the router, the analysis establishes a weaker property:
the batch-level router gradient is scaling-equivalent to that of flat GRPO,
and the per-route $\sigma_C$ normalization supplies a well-scaled signal
whose magnitude does not depend on the group size $M$; no variance-reduction
claim is made for the router update. The stability of the joint
co-optimization of experts and router is an empirical finding rather than a
theorem: it is supported by the training dynamics in
Appendix~\ref{app:hgrpo-dynamics}, where the HGRPO surrogate remains within
$\pm 0.05$ across all epochs while the standard-GRPO variant destabilizes,
and by the \textbf{C vs.\ D} ablation in Table~\ref{tab:ablation}.

\section{HGRPO Training Dynamics}
\label{app:hgrpo-dynamics}

Figure~\ref{fig:hgrpo_training} compares the full method (\emph{ours}) to an ablation that removes hierarchical credit (\emph{$-$hgrpo}, standard GRPO over merged trajectories) on Qwen3-4B SRDD training. \textit{Left:} per-epoch GRPO surrogate (updated steps only, symlog scale). The surrogate measures the importance-ratio--advantage product and is expected to stay near zero under stable on-policy updates; deviations indicate either ratio drift or advantage blow-up. \emph{Ours} remains within $\pm 0.05$ across all $10$ epochs, indicating that the within-route conditional baseline keeps both quantities controlled. The standard-GRPO variant crashes to $\approx -11.5$ at epochs $7$--$8$ before partially recovering, consistent with the variance-reduction analysis in Sec.~\ref{subsec:hgrpo} and Appendix~\ref{app:hgrpo-variance-proof}: without the conditional baseline, advantages computed across merged trajectories accumulate cross-route variance that destabilises the policy ratio. \textit{Right:} the downstream consequence on SRDD evaluation step-pass rate---\emph{ours} rises monotonically from $\sim\!21$\% at epoch~$1$ to $\sim\!28$\% at epoch~$10$, while the \emph{$-$hgrpo} curve plateaus earlier and is eventually overtaken. Together these two views confirm that hierarchical credit does not merely enable MoLE to fit the data: it is required for stable optimization under the DAG/merge rollout structure.

\begin{figure}[!htb]
\centering
\includegraphics[width=0.48\linewidth]{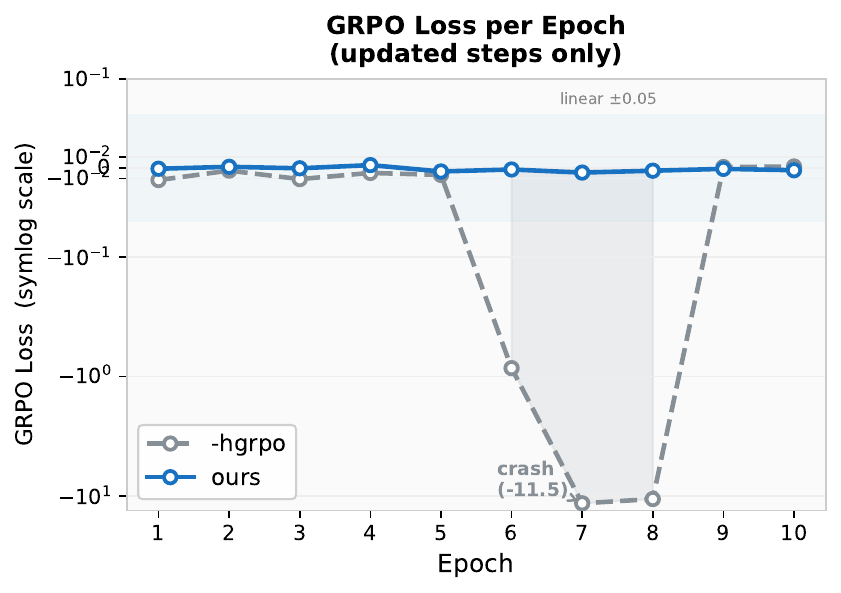}\hfill
\includegraphics[width=0.48\linewidth]{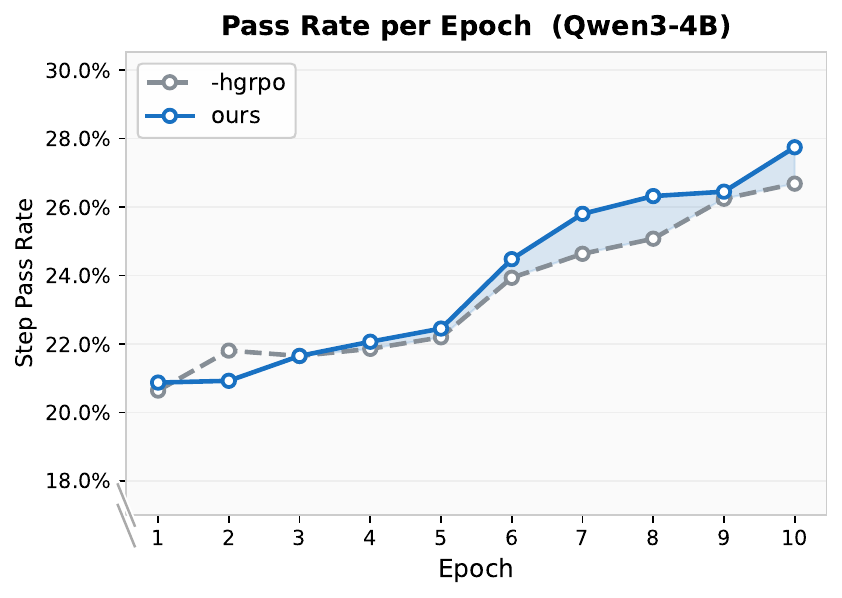}
\caption{\textbf{HGRPO training dynamics on Qwen3-4B (SRDD).} \textit{Left:} per-epoch GRPO loss (updated steps only, symlog scale). \textit{Right:} step-pass rate per epoch.}
\label{fig:hgrpo_training}
\end{figure}

\section{Experimental Setup Details}
\label{app:exp-setup}

This appendix expands the experimental setup of Sec.~\ref{sec:experiment} (Datasets, Baselines, and Evaluation Metrics).

\paragraph{Dataset splits.}
For SRDD we sample from the $1{,}200$ examples with a $3{:}2$ train/test ratio for in-distribution evaluation. The out-of-distribution split is category-disjoint: $8$ of the $40$ categories (\texttt{Strategy\_Game}, \texttt{Science}, \texttt{Health\_Fitness}, \texttt{Graphics}, etc.) are withheld entirely from training, yielding a $4{:}1$ train/test sample ratio. For SciCode we use the official $60/20$ problem split for IID evaluation, and a $64/16$ problem-disjoint OOD split that withholds the \texttt{Chemistry} and \texttt{Biology} domains while training only on \texttt{Physics}, \texttt{Math}, and \texttt{Material Science}.

\paragraph{Baselines.}
\textbf{ChatChain}~\citep{qian2024chatdev} is a sequential CEO/CTO/Programmer/Counselor pipeline that produces code through structured chat across two phases (language selection followed by coding). \textbf{MacNet}~\citep{qian2024macnet} organizes agents on a $5$-node DAG with per-node review and an aggregation step that merges the outputs; both ChatChain and MacNet run on the official ChatDev codebase implementation~\citep{qian2024chatdev}. \textbf{AFlow}~\citep{zhang2024aflow} is a search-based agentic-workflow generator that constructs the orchestration graph via Monte-Carlo Tree Search over operator templates. \textbf{MoRSE}$_{\text{base}}$ is the untrained MoRSE framework (row~A of Table~\ref{tab:ablation}) and serves as an anchor that isolates the contribution of training on the same DAG. All baselines and our method share Qwen3-4B-Instruct, Llama-3.1-8B-Instruct, and Gemma-4-31B-IT backbones served via HuggingFace Transformers in bfloat16, with the same prompts and decoding configuration ($8192$-token budget, temperature $0.2$, top-$p$ $0.95$). Each baseline runs with its own native revision mechanism: code-review-and-test revision in ChatChain, verification-informed review-and-rewrite along edges in MacNet, test-and-revise operators in AFlow, and the same verify-and-regeneration attempts for the single-agent reference. The maximum number of attempts is fixed to $3$ across all pipelines on SRDD; on SciCode, the Table~\ref{tab:main_results} evaluation includes no revision for any method, while the verifier still provides step-level rewards during training.

\paragraph{Evaluation metrics.}
On SRDD we report two scoring metrics from~\citet{qian2024chatdev}: \textbf{Exec}~$(\uparrow)$, the fraction of generated codebases that compile and run end-to-end; and \textbf{ECI}~$(\uparrow)$, a $[0,1]$-valued composite over per-sample completeness $r_{\text{comp}}$, executability $r_{\text{exec}}$, and consistency $r_{\text{cons}}$, summarised across the suite as
\begin{equation}
\mathrm{ECI}_{\text{mean}} = \tfrac{1}{3}\big(r_{\text{exec}} + r_{\text{comp}} + r_{\text{cons}}\big), \qquad \mathrm{ECI}_{\text{product}} = r_{\text{exec}} \cdot r_{\text{comp}} \cdot r_{\text{cons}}.
\end{equation}
\looseness=-1
On SciCode we report the official~\citep{tian2024scicode} \textbf{Step Pass}~$(\uparrow)$ (micro step-level pass rate), \textbf{Mean Step Pass}~$(\uparrow)$ (per-problem step accuracy averaged across problems), and \textbf{Problem Pass}~$(\uparrow)$ (fraction of problems for which all steps pass), identical to the main-text protocol of Sec.~\ref{sec:experiment}.

\paragraph{Reward--metric relationship (SRDD).}
\looseness=-1
We fix $w_{\text{exec}} = 0.5$, $w_{\text{comp}} = 0.5$, and
$w_{\text{cons}} = 1.0$ (split as $0.7$ for consistency with the overall
task description and $0.3$ for the node's own subtask); the weights are
principled defaults fixed a priori from the design intent and kept unchanged
across all trained variants and backbones. The training reward further
differs from the ECI evaluation metrics in both implementation and
composites. The consistency term in the reward is a purely lexical
bag-of-words cosine between the (sub)task description and the stripped code,
without any learned model, whereas the ECI evaluation computes an embedding
cosine with \texttt{gte-Qwen2-7B-instruct}, so the training signal never
observes the evaluation scorer. The composites also differ:
$\mathrm{ECI}_{\text{mean}}$ averages the three components with equal
weights while the reward is non-uniformly weighted, and
$\mathrm{ECI}_{\text{product}}$ multiplies the three components, an
aggregation form that never appears in the reward.
An execution-verified check
(Appendix~\ref{app:detailed-ablation}) verifies that the gains are not an
artifact of this alignment.

\section{Detailed Experimental Analysis}
\label{app:detailed-ablation}

This appendix expands the ablation study of Sec.~\ref{subsec:deeper-analysis} with a module-level progression, budget-accounted trained comparisons, a router-utilization analysis, and an execution-verified independent evaluation; the training-dynamics comparison behind the stability ablation is in Appendix~\ref{app:hgrpo-dynamics}.

\paragraph{Module-level progression.}
Table~\ref{tab:progression} traces the progression from the single agent to the untrained framework and the trained \textbf{MoRSE} on the held-out Test split, where the framework-level uplift covers the task decomposition together with the verifier-guided regeneration, and the training-level uplift covers the proposed parameter specialization and credit assignment. The training-level uplift exceeds the framework-level uplift on every metric on both benchmarks under the same revision budget; within it, the part beyond ordinary RL fine-tuning (D vs.\ the budget-matched single-LoRA controls in Table~\ref{tab:param-budget}) remains positive on every metric.

\begin{table}[!htb]
\centering
\small
\caption{Progression from the single agent to \textbf{MoRSE}$_{\text{base}}$ and the trained \textbf{MoRSE} on the held-out Test split (Qwen3-4B; Table~\ref{tab:main_results} numbers). Uplifts are relative (\%), except SciCode Problem Pass in percentage points.}
\label{tab:progression}
\resizebox{\textwidth}{!}{
\begin{tabular}{l|l|c|c}
\toprule
\# & \textbf{Setting} & \textbf{SRDD (Exec / ECI-M / ECI-P)} & \textbf{SciCode (Step / Mean Step / Problem Pass)} \\
\midrule
1 & Single-agent & 65.00 / 0.690 / 0.286 & 21.43 / 19.33 / 0.00 \\
2 & MoRSE$_{\text{base}}$ & 72.50 / 0.693 / 0.268 & 20.41 / 19.65 / 0.00 \\
3 & MoRSE & \textbf{86.25 / 0.755 / 0.360} & \textbf{29.00 / 32.50 / 10.00} \\
\midrule
-- & Framework-level uplift ($1 \to 2$) & $+11.5\%$ / $+0.4\%$ / $-6.3\%$ & $-4.8\%$ / $+1.7\%$ / $+0.0$ pp \\
-- & Training-level uplift ($2 \to 3$) & $\mathbf{+19.0\%}$ / $\mathbf{+8.9\%}$ / $\mathbf{+34.3\%}$ & $\mathbf{+42.1\%}$ / $\mathbf{+65.4\%}$ / $\mathbf{+10.0}$ \textbf{pp} \\
\bottomrule
\end{tabular}
}
\end{table}

\paragraph{All trained comparisons with budget accounting.}
Table~\ref{tab:param-budget} collects all trained comparisons together with the full budget accounting. Each expert is a rank-$8$ LoRA on the q/v/o projections of the last $8$ blocks, and each call activates $1$ role $+$ top-$2$ subtask experts, giving the activated vs.\ total accounting ($3.24$M / $6.49$M). \textit{Trainable parameters:} B exactly matches \textbf{MoRSE}'s per-call activation, and B$+$ matches the total budget. \textit{Rollout budget:} identical total per call, sampled as one flat credit group under standard GRPO and as $4$ routes $\times$ $4$ candidates per route under HGRPO; for C, each of the $16$ candidates still samples its own expert combination, while the credit group stays flat. \textit{Optimization budget:} identical training data, reward, and schedule, under the same AdamW optimizer with the same LoRA learning rate ($8 \times 10^{-5}$) and top-$p = 0.95$ sampling (Appendix~\ref{app:impl-details}). D outperforms both single-LoRA controls (B, B$+$), and doubling the single LoRA's rank to the total-parameter match brings no consistent further gain under the same data and training settings, indicating that the gains are not simply an effect of a larger parameter budget under either the activated or the total accounting.

\begin{table}[!htb]
\centering
\small
\caption{All trained comparisons on the SRDD held-out Test split (Qwen3-4B). Rollouts are per call; Router = the dynamic role--subtask prototype router; GRPO = standard GRPO.}
\label{tab:param-budget}
\resizebox{\textwidth}{!}{
\begin{tabular}{l|l|c|c|c|c|c|c|c}
\toprule
\# & \textbf{Setting} & \textbf{Rank} & \textbf{Act.\ params} & \textbf{Total params} & \textbf{Rollouts} & \textbf{Router} & \textbf{Credit} & \textbf{Exec / ECI-M / ECI-P} \\
\midrule
1 & Single-agent (Table~\ref{tab:main_results}) & -- & -- & -- & -- & -- & -- & 65.00 / 0.690 / 0.286 \\
2 & Single-agent LoRA RL & $24 \times 1$ & \textbf{3.24M} & 3.24M & \textbf{16} & none & GRPO & 70.75 / 0.722 / 0.321 \\
3 & A. MoRSE$_{\text{base}}$ (Table~\ref{tab:ablation}) & -- & -- & -- & -- & -- & -- & 72.50 / 0.693 / 0.268 \\
4 & B. MoRSE w/o MoLE (act.\ params matched; Table~\ref{tab:ablation}) & $24 \times 1$ & \textbf{3.24M} & 3.24M & \textbf{16} & none & GRPO & 82.50 / 0.745 / 0.329 \\
5 & B$+$. MoRSE w/o MoLE (total params matched) & $48 \times 1$ & 6.49M & \textbf{6.49M} & \textbf{16} & none & GRPO & 81.25 / 0.741 / 0.348 \\
6 & C. MoRSE w/o HGRPO ($2$-role $+$ $4$-subtask MoLE; Table~\ref{tab:ablation}) & $8 \times (2{+}4)$ & \textbf{3.24M} & \textbf{6.49M} & \textbf{16} & \checkmark & GRPO & 73.75 / 0.696 / 0.257 \\
7 & D. MoRSE ($2$-role $+$ $4$-subtask MoLE; Table~\ref{tab:ablation}) & $8 \times (2{+}4)$ & \textbf{3.24M} & \textbf{6.49M} & \textbf{16} ($4 \times 4$) & \checkmark & HGRPO & \textbf{86.25 / 0.755 / 0.360} \\
\bottomrule
\end{tabular}
}
\end{table}

\looseness=-1
\textit{Isolation map.} \textbf{HGRPO:} row 6 vs.\ row 7 (fixed MoLE, standard GRPO $\to$ HGRPO) isolates the hierarchical credit assignment. \textbf{Dynamic role--subtask routing:} row 4 vs.\ row 6 isolates the routing architecture under standard GRPO, fragile without hierarchical credit (Sec.~\ref{subsec:deeper-analysis}), while rows 4--5 vs.\ row 7 give the net contribution of dynamic routing combined with HGRPO under matched budgets. \textbf{The multi-agent structure itself:} row 1 vs.\ row 3 without training, and row 2 vs.\ row 4 under identical training, where the two rows share the same data, trainable-parameter, rollout, and optimization budgets and differ only in the multi-agent structure.

\paragraph{Router utilization and the expert-count diagnostic.}
After training, the router does not collapse to a nearly fixed route: every
expert enters the greedy top-$2$ for some subtasks, the soft routing entropy
averages $1.374$ nats and the realized top-$2$ utilization entropy is
$1.009$ nats, both against the maximum $\ln 4 = 1.386$ and the latter well
above the $\ln 2 = 0.693$ of a fixed pair, and the most common pair carries
$76.1\%$ of the subtasks while $23.9\%$ route elsewhere
(Table~\ref{tab:utilization}). The pair pattern is a learned
shared-plus-specialist structure rather than a degeneration, one generalist
expert staying active while the second slot switches with the subtask,
mirroring the deliberate shared-expert design of
DeepSeekMoE~\citep{dai2024deepseekmoe}. Training further includes two
anti-collapse mechanisms, an entropy regularizer on each routing
distribution (weight $0.02$) and an orthogonality penalty on the expert
prototypes (weight $0.05$), so every expert receives gradient signal
throughout training.

\begin{table}[!htb]
\centering
\small
\caption{Expert- and pair-level utilization of the trained model (Table~\ref{tab:ablation} row D, Qwen3-4B, SRDD test).}
\label{tab:utilization}
\begin{tabular}{c|c|c||c|c}
\toprule
\textbf{Expert} & \textbf{Selected in top-2} & \textbf{Mean routing prob.} & \textbf{Expert pair (top-2)} & \textbf{Share of subtasks} \\
\midrule
0 & 21.3\% & 0.227 & (1, 3) & 76.1\% \\
1 & 76.1\% & 0.238 & (0, 3) & 21.3\% \\
2 & 2.6\% & 0.219 & (2, 3) & 2.6\% \\
3 & 100.0\% & 0.316 & (0,1), (0,2), (1,2) & 0\% \\
\bottomrule
\end{tabular}
\end{table}

\looseness=-1
The cluster count in Sec.~\ref{subsec:prelim-mismatch} is diagnostic, while
the expert count is a design choice: $K{=}8$ is the
clustering resolution used to demonstrate that heterogeneous subtask demands
exist, not a design target, and with top-$2$ routing the four subtask
experts act as a compositional basis, providing $\binom{4}{2}=6$ subtask
combinations crossed with the two role experts. The choice of four was fixed in advance, never tuned on validation or test performance, and we do not claim it is optimal.

\paragraph{Execution-verified independent evaluation (SRDD).}
To verify that the SRDD gains are not an artifact of the residual alignment
between the training reward and the ECI metrics
(Appendix~\ref{app:exp-setup}), we build an execution-verified evaluation
that is not encoded in the training reward and apply it to variants A, B,
and D of Table~\ref{tab:ablation} on the SRDD test split (Qwen3-4B). For
each task, $3$ to $5$ acceptance checks are derived from the task
description alone, blind to all systems; every generated program is then
executed in a scripted interactive session, and its runtime transcript is
scored against the checks by a deterministic pattern match and by a stricter
LLM pass/fail grading. Both the checks and the LLM grading come from a model
family disjoint from the trained backbone and the evaluation embedder, and
the training reward never sees the checks, the runtime protocol, or either
scoring rule. Table~\ref{tab:exec-verified} reports both metrics;
\textbf{MoRSE} improves over both variants under either scoring rule.

\begin{table}[!htb]
\centering
\small
\caption{Execution-verified evaluation of the Table~\ref{tab:ablation} variants on the SRDD test split (Qwen3-4B). Coverage is the fraction of acceptance checks satisfied.}
\label{tab:exec-verified}
\begin{tabular}{l|c|c}
\toprule
\textbf{Variant} & \textbf{Coverage (pattern-matched)} & \textbf{Coverage (LLM-scored, Claude Opus 4.8)} \\
\midrule
A. MoRSE$_{\text{base}}$ & 0.501 & 0.167 \\
B. MoRSE w/o MoLE & 0.582 & 0.202 \\
D. MoRSE & \textbf{0.619} & \textbf{0.224} \\
\bottomrule
\end{tabular}
\end{table}

\section{OOD Generalization Tables}
\label{app:ood-tables}

This appendix provides per-row OOD tables corresponding to Fig.~\ref{fig:ood} in the main body. $\Delta^{\mathrm{gain}}_{\mathrm{OOD}}=\text{variant}-\text{A}$ computed on the same OOD samples isolates the training contribution from intrinsic domain difficulty.

\begin{table*}[!htb]
\centering
\small
\setlength{\tabcolsep}{3pt}
\caption{OOD generalization on \textbf{SRDD} (8 categories held out from training).}
\label{tab:ood_srdd}
\resizebox{\textwidth}{!}{
\begin{tabular}{l|l|CCC|CCC}
\toprule
& & \multicolumn{3}{c|}{\textbf{OOD test}} & \multicolumn{3}{c}{\textbf{$\Delta^{\mathrm{gain}}_{\mathrm{OOD}}$ = variant $-$ A}} \\
\textbf{Variant} & \textbf{Description}
& {\scriptsize\shortstack{\textbf{Exec}\\\textbf{(\%)}}} & {\scriptsize\shortstack{\textbf{ECI}\\\textbf{Mean}}} & {\scriptsize\shortstack{\textbf{ECI}\\\textbf{Product}}}
& {\scriptsize\shortstack{\textbf{$\Delta$}\\\textbf{Exec}}} & {\scriptsize\shortstack{\textbf{$\Delta$ ECI}\\\textbf{Mean}}} & {\scriptsize\shortstack{\textbf{$\Delta$ ECI}\\\textbf{Product}}} \\
\midrule
\multicolumn{8}{c}{\cellcolor{gray!15}\textbf{Qwen3-4B-Instruct}} \\
\midrule
0.\ MacNet                               & \textit{prior MAS baseline}         & 66.25          & 0.616          & 0.186          & $-$23.75 & $-$0.149 & $-$0.184 \\
A.\ MoRSE w/o MoLE \& HGRPO (\textbf{MoRSE}$_{\text{base}}$)              & \textit{untrained reference}        & 90.00          & 0.765          & 0.370          & 0.00 & 0.000 & 0.000 \\
B.\ MoRSE w/o MoLE                       & \textit{standard-training baseline} & 88.75          & 0.765          & 0.374          & $-$1.25 & 0.000 & +0.004 \\
\rowcolor{gray!15}
D.\ \textbf{MoRSE}                       & \textit{ours}                       & \textbf{96.25} & \textbf{0.783} & \textbf{0.392} & \textbf{+6.25} & \textbf{+0.018} & \textbf{+0.022} \\
\midrule
\multicolumn{8}{c}{\cellcolor{gray!15}\textbf{Llama-3.1-8B-Instruct}} \\
\midrule
0.\ MacNet                               & \textit{prior MAS baseline}         & 71.25          & 0.630          & 0.184          & $-$1.25 & $-$0.046 & $-$0.058 \\
A.\ MoRSE w/o MoLE \& HGRPO (\textbf{MoRSE}$_{\text{base}}$)              & \textit{untrained reference}        & 72.50          & 0.676          & 0.242          & 0.00 & 0.000 & 0.000 \\
B.\ MoRSE w/o MoLE                       & \textit{standard-training baseline} & 76.25          & 0.702          & 0.285          & +3.75 & +0.026 & +0.043 \\
\rowcolor{gray!15}
D.\ \textbf{MoRSE}                       & \textit{ours}                       & \textbf{80.00} & \textbf{0.722} & \textbf{0.310} & \textbf{+7.50} & \textbf{+0.046} & \textbf{+0.068} \\
\bottomrule
\end{tabular}
}
\end{table*}

\begin{table*}[!htb]
\centering
\small
\setlength{\tabcolsep}{3pt}
\caption{OOD generalization on \textbf{SciCode} (Chemistry$+$Biology withheld; training uses only Physics$+$Math$+$Material~Science).}
\label{tab:ood_scicode}
\resizebox{\textwidth}{!}{
\begin{tabular}{l|l|CCC|CCC}
\toprule
& & \multicolumn{3}{c|}{\textbf{OOD test}} & \multicolumn{3}{c}{\textbf{$\Delta^{\mathrm{gain}}_{\mathrm{OOD}}$ = variant $-$ A}} \\
\textbf{Variant} & \textbf{Description}
& {\scriptsize\shortstack{\textbf{Step}\\\textbf{Pass (\%)}}} & {\scriptsize\shortstack{\textbf{Mean Step}\\\textbf{Pass (\%)}}} & {\scriptsize\shortstack{\textbf{Problem}\\\textbf{Pass (\%)}}}
& {\scriptsize\shortstack{\textbf{$\Delta$ Step}\\\textbf{Pass}}} & {\scriptsize\shortstack{\textbf{$\Delta$ Mean}\\\textbf{Step Pass}}} & {\scriptsize\shortstack{\textbf{$\Delta$ Problem}\\\textbf{Pass}}} \\
\midrule
\multicolumn{8}{c}{\cellcolor{gray!15}\textbf{Qwen3-4B-Instruct}} \\
\midrule
0.\ MacNet                               & \textit{prior MAS baseline}         & 8.75      & 9.79      & 0.00    & $-$8.75      & $-$10.28      & 0.00    \\
A.\ MoRSE w/o MoLE \& HGRPO (\textbf{MoRSE}$_{\text{base}}$)              & \textit{untrained reference}        & 17.50 & 20.07 & 0.00 & 0.00   & 0.00   & 0.00 \\
B.\ MoRSE w/o MoLE                       & \textit{standard-training baseline} & 22.22 & 16.67 & 0.00 & +4.72  & $-$3.40 & 0.00 \\
\rowcolor{gray!15}
D.\ \textbf{MoRSE}                       & \textit{ours}                       & \textbf{24.69} & \textbf{19.44} & \textbf{0.00} & \textbf{+7.19} & \textbf{$-$0.63} & \textbf{0.00} \\
\midrule
\multicolumn{8}{c}{\cellcolor{gray!15}\textbf{Llama-3.1-8B-Instruct}} \\
\midrule
0.\ MacNet                               & \textit{prior MAS baseline}         & 8.75     & 9.06     & 0.00    & +1.25      & 0.00      & 0.00    \\
A.\ MoRSE w/o MoLE \& HGRPO (\textbf{MoRSE}$_{\text{base}}$)              & \textit{untrained reference}        & 7.50  & 9.06  & 0.00 & 0.00   & 0.00   & 0.00 \\
B.\ MoRSE w/o MoLE                       & \textit{standard-training baseline} & 16.05 & 9.26  & 0.00 & +8.55  & +0.20  & 0.00 \\
\rowcolor{gray!15}
D.\ \textbf{MoRSE}                       & \textit{ours}                       & \textbf{18.52} & \textbf{9.57}  & \textbf{0.00} & \textbf{+11.02} & \textbf{+0.51}  & \textbf{0.00} \\
\bottomrule
\end{tabular}
}
\end{table*}

\newpage

\section{Role-Subtask Architecture Mismatch in the Base LM: Procedure, Diagnostics, and Coverage Limit}
\label{app:role-subtask-mismatch}

\looseness=-1
This appendix supports the empirical motivation in Sec.~\ref{subsec:prelim-mismatch}, where we claim that different roles and subtasks impose mismatched parameter-adaptation demands on the base LM. We operationalise this mismatch via rank-$\rho$ subspace divergence between $(\text{role}, \text{subtask})$ groups.
\S\ref{app:role-subtask-subspace-procedure} states the diagnostic procedure;
\S\ref{app:subspace-diag} provides supplementary evidence (per-layer stability, cross-benchmark robustness);
\S\ref{app:lora-subspace-conflict} proves a coverage-limit theorem in the same divergence quantity, formalising why a single shared rank-$\rho$ LoRA cannot cover both groups simultaneously.

\subsection{Diagnostic Procedure}
\label{app:role-subtask-subspace-procedure}

Our diagnostic is inspired by prior representation-similarity analyses that summarise neural activations by their low-rank principal subspaces and compare two such subspaces via Frobenius/principal-angle distances on the Grassmann manifold~\citep{raghu2017svcca,kornblith2019similarity,hamm2008grassmann}. Whereas those works compare representations \emph{across networks or layers}, we apply the same machinery \emph{within a single base LM at a fixed layer}, conditioned on $(\text{role}, \text{subtask})$ group membership, in order to test whether different groups carry their adaptation signal in different rank-$\rho$ directions. The role diagnostic (Figure~\ref{fig:prelim-role-subtask-mismatch}(a)) and the subtask diagnostic (Figure~\ref{fig:prelim-role-subtask-mismatch}(b,c)) share the same SVD-subspace + shuffle-baseline pipeline; they differ only in how prompts are constructed and how groups are defined.

\paragraph{Backbone and target layers.}
We use Qwen3-4B-Instruct in \texttt{float16}. The diagnostic targets the last $N{=}8$ decoder layers, matching the LoRA injection range used by MoLE on $\{\texttt{q\_proj}, \texttt{v\_proj}, \texttt{o\_proj}\}$. For each prompt we run a single no-grad forward pass and capture the input activation to each target layer at the last token position---exactly what the LoRA module sees---giving a tensor $A^{(L)} \in \mathbb{R}^{N_p \times d}$ per layer.

\paragraph{Subspace and divergence measure.}
For each layer $L$ and group $g$, restrict $A^{(L)}$ to rows in $g$, mean-center, and take the top-$\rho$ right singular vectors $V_g^{(L)} \in \mathbb{R}^{d \times \rho}$ with $\rho{=}8$ matching the LoRA rank. For every group pair $(g_a, g_b)$,
\[
\mathrm{div}^{(L)}(g_a, g_b) \;=\; 1 - \tfrac{1}{\rho}\big\|\,(V_{g_a}^{(L)})^{\top} V_{g_b}^{(L)}\big\|_F^2 \;\in\; [0, 1],
\]
where $0$ corresponds to identical rank-$\rho$ subspaces and $1$ to mutually orthogonal ones.

\paragraph{Shuffle baseline.}
For each layer we randomly permute group labels (preserving group sizes), recompute the group SVD subspaces, and average the off-diagonal divergence; repeating $T{=}20$ trials yields a baseline mean$\pm\sigma$. The real-vs-shuffle gap quantifies how much divergence is genuinely group-conditional rather than an artifact of any size-preserving partition.

\paragraph{Role diagnostic ($K{=}2$).}
We collect $200$ \texttt{execute}-node subtask descriptions from SRDD inference traces. Each is rendered into \emph{two} prompts that differ only in a role marker (\texttt{ROLE=execute} vs.\ \texttt{ROLE=merge}) and a short role-specific instruction tail; subtask content is held identical, so the divergence isolates role-conditioning from subtask content.

\paragraph{Subtask diagnostic ($K{=}8$).}
We take all SRDD \texttt{execute}-node prompts, build TF-IDF features, and run KMeans with $K{=}8$ clusters (a finer granularity than the $K_s{=}4$ subtask experts; the clustering is a diagnostic instrument rather than a prescription for the expert count). Each prompt uses a fixed \texttt{ROLE=execute} marker so role is held constant and cluster identity is the only varying factor; up to $50$ prompts per cluster are sampled. Off-diagonal entries of the $K{\times}K$ divergence matrix appear in Figure~\ref{fig:prelim-role-subtask-mismatch}(c) (representative layer) and Figure~\ref{fig:app-srdd-per-layer} (all 8 layers).

\subsection{Supplementary Diagnostics}
\label{app:subspace-diag}

\paragraph{All eight LoRA layers.}
Figure~\ref{fig:app-srdd-per-layer} shows the $K{\times}K$ subtask-cluster divergence heatmap for every LoRA-injected layer. The pattern of high- and low-divergence cluster pairs is visually stable across all eight layers, confirming that the layer instance shown in Figure~\ref{fig:prelim-role-subtask-mismatch}(c) is not cherry-picked.

\begin{figure}[!t]
  \centering
  \includegraphics[width=0.82\linewidth]{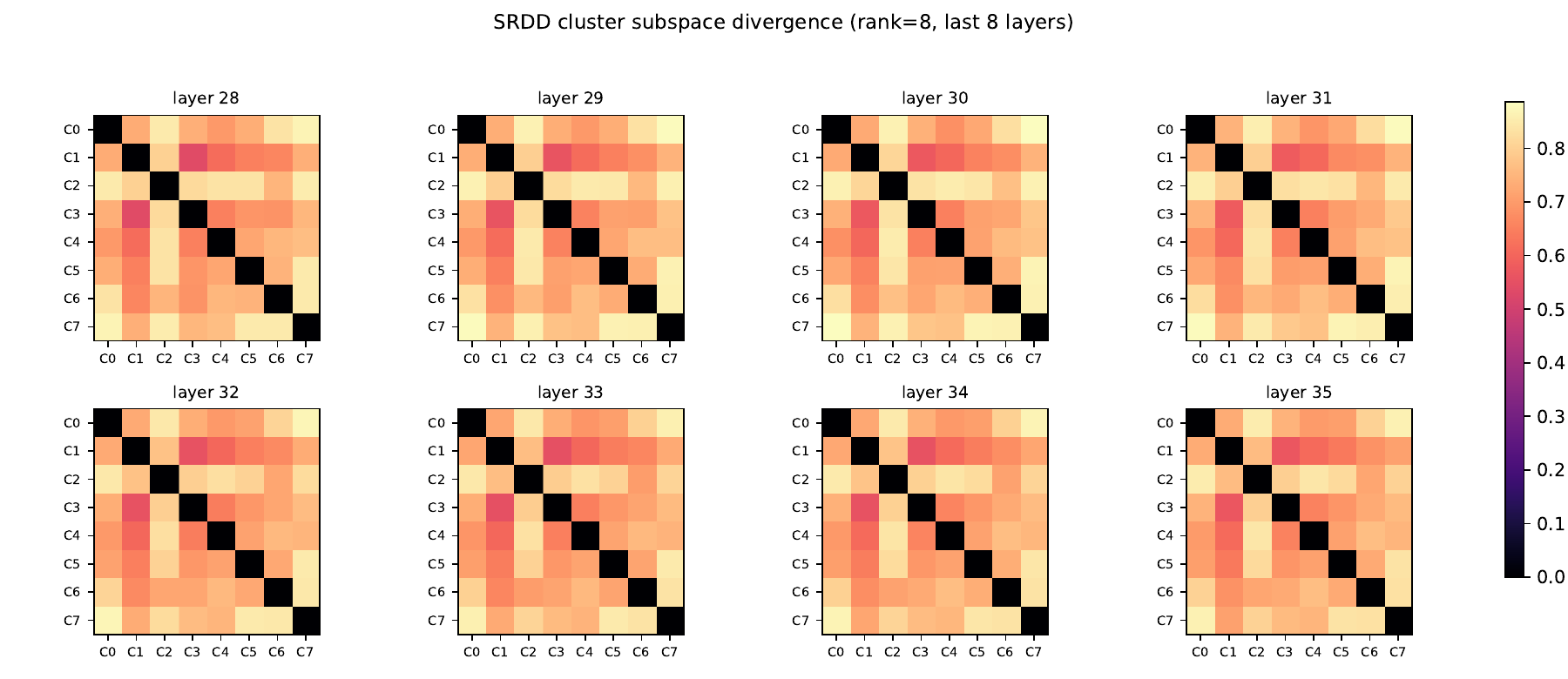}
  \caption{\textbf{SRDD subspace divergence across all 8 LoRA layers.} Each sub-panel is a cluster-pair heatmap ($1-\|V_a^\top V_b\|_F^2/\rho$) at one layer; off-diagonal structure is nearly constant across panels.}
  \label{fig:app-srdd-per-layer}
\end{figure}

\paragraph{Robustness across benchmarks (SciCode).}
Repeating the diagnostic on SciCode reproduces the SRDD finding: the real-vs-shuffle gap is $0.29$ ($0.73$ vs.\ $0.44$), $\pm\sigma$-bands are non-overlapping at every layer, and the layer-wise cluster-pair heatmap shows the same off-diagonal heterogeneity (Figure~\ref{fig:app-scicode}).

\begin{figure}[!t]
  \centering
  \begin{subfigure}[b]{0.42\linewidth}
    \includegraphics[width=\linewidth]{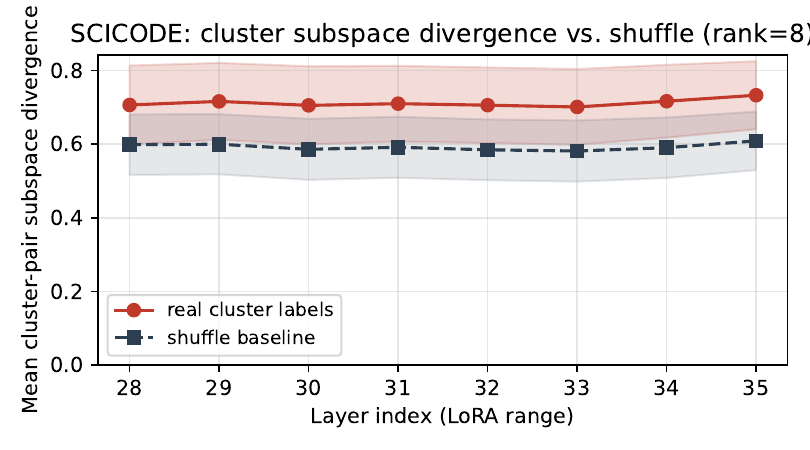}
    \caption{SciCode: real vs.\ shuffle.}
    \label{fig:app-scicode-a}
  \end{subfigure}
  \hfill
  \begin{subfigure}[b]{0.42\linewidth}
    \includegraphics[width=\linewidth]{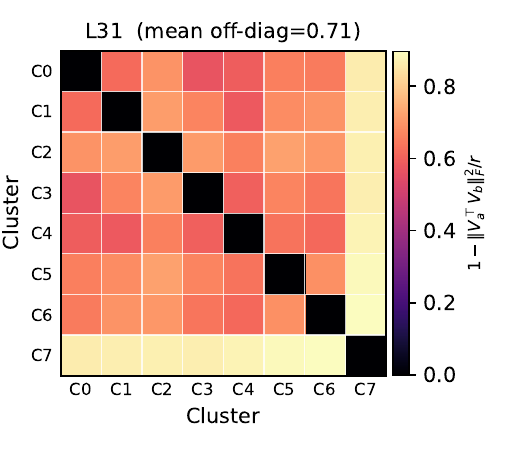}
    \caption{SciCode: representative layer.}
    \label{fig:app-scicode-b}
  \end{subfigure}
  \caption{\textbf{SciCode robustness.} Repeating the SRDD diagnostic on SciCode: real-vs-shuffle gap $0.29$ with non-overlapping $\pm\sigma$-bands across layers (left), and the same off-diagonal cluster-pair structure at the representative layer (right).}
  \label{fig:app-scicode}
\end{figure}

\subsection{LoRA Subspace Coverage Limit}
\label{app:lora-subspace-conflict}

The empirical observation in Sec.~\ref{subsec:prelim-mismatch} is that role- and subtask-conditioned updates of the base LM live in distinct rank-$\rho$ subspaces. We now show that no single shared rank-$\rho$ LoRA adapter can simultaneously cover both subspaces. The assumption is stated in the same divergence quantity as Figure~\ref{fig:prelim-role-subtask-mismatch}, so the empirical and theoretical statements speak about the same object.

\begin{assumption}[Group subspace divergence]
\label{assump:lora-noncollinear}
At some layer of the base LM, there exist two groups (e.g., two roles, or two subtask clusters) whose group-optimal rank-$\rho$ LoRA update subspaces $\mathcal{V}_a, \mathcal{V}_b \subseteq \mathbb{R}^{d}$, with orthonormal bases $V_a, V_b \in \mathbb{R}^{d \times \rho}$, satisfy
\[
\mathcal{D}(V_a, V_b) \;\triangleq\; 1 - \tfrac{1}{\rho}\|V_a^{\top} V_b\|_F^2 \;\ge\; \Delta \;>\; 0.
\]
\end{assumption}

\noindent This is exactly the divergence in Figure~\ref{fig:prelim-role-subtask-mismatch}(a,b); $\Delta$ is the empirically measured real-label value at each layer.

\begin{theorem}[LoRA Subspace Coverage Limit]
\label{thm:lora-subspace-conflict}
Let $\theta_1, \dots, \theta_\rho \in [0, \pi/2]$ be the principal angles between $\mathcal{V}_a$ and $\mathcal{V}_b$. For any shared rank-$\rho$ LoRA update subspace with orthonormal basis $V_s \in \mathbb{R}^{d \times \rho}$,
\[
\bar p(V_s) \;\triangleq\; \tfrac{\|V_s^\top V_a\|_F^2 + \|V_s^\top V_b\|_F^2}{2\rho}
\;\le\; \tfrac{1}{2} + \tfrac{1}{2\rho}\textstyle\sum_{i=1}^\rho \cos\theta_i
\;\le\; \tfrac{1+\sqrt{1-\Delta}}{2} \;<\; 1.
\]
The first inequality is tight when $V_s$ stacks the top-$\rho$ eigenvectors of $V_a V_a^\top + V_b V_b^\top$, i.e., the principal-angle bisecting subspace $\mathrm{span}\{(u_i+w_i)/\|u_i+w_i\|\}_{i=1}^\rho$ of $(\mathcal{V}_a, \mathcal{V}_b)$. The second is tight iff all principal angles are equal. As an immediate corollary, $\min\bigl(\|V_s^\top V_a\|_F^2,\,\|V_s^\top V_b\|_F^2\bigr)/\rho \le (1+\sqrt{1-\Delta})/2$, so at least one of the two groups is strictly under-covered by any single shared rank-$\rho$ adapter.
\end{theorem}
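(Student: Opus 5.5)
The plan is to reduce the first inequality to a Ky Fan eigenvalue bound, and the second to Cauchy--Schwarz applied to the principal-angle cosines.

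\textbf{Step 1 (trace form).} Write $P_a = V_aV_a^\top$ and $P_b = V_bV_b^\top$ for the orthogonal projectors onto $\mathcal{V}_a$ and $\mathcal{V}_b$. Then
\[
\|V_s^\top V_a\|_F^2 + \|V_s^\top V_b\|_F^2 = \mathrm{tr}\bigl(V_s^\top (P_a+P_b) V_s\bigr).
\]
Since $V_s$ has orthonormal columns, Ky Fan's maximum principle bounds this by the sum of the top-$\rho$ eigenvalues of $P_a+P_b$. Equality holds when $V_s$ spans the corresponding top eigenvectors, which gives the stated tightness condition.

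\textbf{Step 2 (spectrum of $P_a+P_b$).} Take principal vector pairs $(u_i,w_i)$, $i=1,\dots,\rho$, with $u_i\in\mathcal{V}_a$, $w_i\in\mathcal{V}_b$ and $u_i^\top w_i=\cos\theta_i$. Pairs belonging to distinct indices are mutually orthogonal: $u_i\perp u_j$, $w_i\perp w_j$ and $u_i\perp w_j$ for $i\neq j$. This is the CS decomposition (Jordan's lemma).

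Consider first $0<\theta_i<\pi/2$. Then $\mathrm{span}\{u_i,w_i\}$ is a two-dimensional invariant subspace of $P_a+P_b$. Restricted to it, the operator has eigenvalues $1\pm\cos\theta_i$, with eigenvectors $u_i\pm w_i$. On the orthogonal complement of $\mathcal{V}_a+\mathcal{V}_b$ the operator is zero.

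The degenerate cases must be checked separately:
\begin{itemize}
\item If $\theta_i=0$, then $u_i=w_i$, which contributes the eigenvalue $2=1+\cos\theta_i$; the companion value $0=1-\cos\theta_i$ is absorbed into the null space.
\item If $\theta_i=\pi/2$, then $u_i$ and $w_i$ are orthogonal and each contributes the eigenvalue $1$, which equals $1\pm\cos\theta_i$.
\end{itemize}
In every case the nonzero spectrum consists of the values $\{1+\cos\theta_i\}\cup\{1-\cos\theta_i\}$.

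Every value $1-\cos\theta_j\le 1\le 1+\cos\theta_i$. Hence the top-$\rho$ eigenvalue sum is $\rho+\sum_i\cos\theta_i$. Dividing by $2\rho$ gives the first inequality, attained by the bisectors $(u_i+w_i)/\|u_i+w_i\|$.

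\textbf{Step 3 (second inequality).} Use $\|V_a^\top V_b\|_F^2=\sum_i\cos^2\theta_i$, which holds because the singular values of $V_a^\top V_b$ are the cosines $\cos\theta_i$. Assumption~\ref{assump:lora-noncollinear} then gives $\sum_i\cos^2\theta_i\le\rho(1-\Delta)$. By Cauchy--Schwarz,
\[
\sum_i\cos\theta_i\le\sqrt{\rho\textstyle\sum_i\cos^2\theta_i}\le\rho\sqrt{1-\Delta},
\]
so the middle quantity is at most $(1+\sqrt{1-\Delta})/2$. Equality in Cauchy--Schwarz holds iff all $\cos\theta_i$ are equal, which matches the stated tightness condition. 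This bound is $<1$ because $\Delta>0$.

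\textbf{Step 4 (corollary).} The minimum of the two coverages is at most their average $\bar p(V_s)$, so the corollary follows immediately from the bound in Step 3.

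\textbf{Main obstacle.} I expect the hardest part to be Step 2. It requires a rigorous simultaneous block-diagonalization of $P_a$ and $P_b$, including the cases where $\mathcal{V}_a$ and $\mathcal{V}_b$ intersect ($\theta_i=0$) or contain mutually orthogonal directions ($\theta_i=\pi/2$). I would invoke the CS decomposition of $V_a^\top V_b$ directly to obtain the orthonormal principal-vector system rather than re-deriving it.
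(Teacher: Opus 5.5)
Your proposal is correct and follows the paper's proof essentially step for step. The shared steps are the trace form with $V_aV_a^\top + V_bV_b^\top$, the Ky Fan maximum, the spectrum $\{1\pm\cos\theta_i\}$ read off from the SVD of $V_a^\top V_b$ (including the $\theta_i=0$ degenerate case), Cauchy--Schwarz against $\sum_i\cos^2\theta_i \le \rho(1-\Delta)$, and min $\le$ average. One small caveat applies equally to the paper: equality in the second inequality also needs $\sum_i\cos^2\theta_i = \rho(1-\Delta)$, i.e.\ $\Delta = \mathcal{D}(V_a,V_b)$ exactly, so the ``iff all principal angles are equal'' tightness claim holds only under that reading of $\Delta$.
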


\begin{proof}
\textbf{(1) Trace formulation.}
Using $\|V_s^\top V\|_F^2 = \mathrm{tr}(V_s^\top V V^\top V_s)$ and trace cyclicity,
\[
\|V_s^\top V_a\|_F^2 + \|V_s^\top V_b\|_F^2
= \mathrm{tr}\!\big(V_s^\top M V_s\big), \qquad
M \;\triangleq\; V_a V_a^\top + V_b V_b^\top \;\in\; \mathbb{R}^{d\times d},
\]
where $M$ is symmetric and positive semi-definite.

\textbf{(2) Ky Fan trace maximum.}
Let $\lambda_1(M) \ge \lambda_2(M) \ge \cdots \ge \lambda_d(M) \ge 0$ be the eigenvalues of $M$. By the Ky Fan trace maximum principle,
\[
\max_{V_s^\top V_s = I_\rho} \mathrm{tr}\!\big(V_s^\top M V_s\big) \;=\; \sum_{i=1}^{\rho} \lambda_i(M),
\]
attained by stacking the top-$\rho$ orthonormal eigenvectors of $M$ in $V_s$.

\textbf{(3) Spectrum of $M$ via principal angles.}
Write the SVD $V_a^\top V_b = U \,\mathrm{diag}(\cos\theta_i)\, W^\top$ with $U, W \in \mathbb{R}^{\rho\times\rho}$ orthogonal and $\theta_i \in [0,\pi/2]$. Set $\bar V_a \triangleq V_a U$ and $\bar V_b \triangleq V_b W$; their columns $u_i, w_i \in \mathbb{R}^d$ form orthonormal bases of $\mathcal{V}_a$ and $\mathcal{V}_b$ with $u_i^\top w_j = \cos\theta_i\,\delta_{ij}$ (canonical correlation pairs), and $V_a V_a^\top = \bar V_a \bar V_a^\top$, similarly for $b$. For each $i$, $M$ leaves $\mathrm{span}(u_i, w_i)$ invariant: when $\theta_i \in (0, \pi/2]$ the two vectors are linearly independent and a direct computation gives eigenvalues $1+\cos\theta_i$ (eigenvector $\propto u_i+w_i$) and $1-\cos\theta_i$ (eigenvector $\propto u_i-w_i$); when $\theta_i = 0$ we have $u_i=w_i$ and the subspace collapses to one eigenvalue $1+\cos\theta_i = 2$. On $(\mathcal{V}_a + \mathcal{V}_b)^\perp$, $M$ acts as zero. Hence the non-zero spectrum of $M$ is
\[
\{1+\cos\theta_i\}_{i=1}^\rho \;\cup\; \{1-\cos\theta_i : \theta_i > 0\}.
\]
Since $\cos\theta_i \in [0,1]$, every $1+\cos\theta_i \ge 1 \ge 1-\cos\theta_j$, so the top $\rho$ eigenvalues are precisely $\{1+\cos\theta_i\}_{i=1}^\rho$, giving
\[
\sum_{i=1}^{\rho} \lambda_i(M) \;=\; \rho \,+\, \sum_{i=1}^\rho \cos\theta_i.
\]
Combining with (1)--(2) yields the first inequality of the theorem (with equality at the principal-bisector subspace).

\textbf{(4) Bounding $\sum_i \cos\theta_i$ by $\Delta$.}
Since $\|V_a^\top V_b\|_F^2 = \sum_{i=1}^\rho \cos^2\theta_i$, Assumption~\ref{assump:lora-noncollinear} gives $\sum_i \cos^2\theta_i \le \rho(1-\Delta)$. By Cauchy--Schwarz applied to the all-ones vector and $(\cos\theta_i)_{i=1}^\rho$,
\[
\sum_{i=1}^\rho \cos\theta_i \;\le\; \sqrt{\rho \cdot \textstyle\sum_i \cos^2\theta_i} \;\le\; \rho\sqrt{1-\Delta},
\]
with equality iff all $\cos\theta_i$ are equal. Substituting yields the second inequality of the theorem; $\Delta>0$ implies $\sqrt{1-\Delta}<1$, so $\bar p(V_s) < 1$.

\textbf{(5) Min-coverage corollary.}
$\min(x_1, x_2) \le (x_1+x_2)/2$ applied to $x_g = \|V_s^\top V_g\|_F^2/\rho$ gives the corollary.
\end{proof}

\paragraph{Connection to dynamic LoRA composition.}
The assumption parameter $\Delta$ is exactly the rank-$\rho$ subspace divergence reported in Figure~\ref{fig:prelim-role-subtask-mismatch}(a,b), so the empirical measurement and the theoretical bound describe the same quantity (no bridging assumption needed). Plugging in the observed subtask-cluster divergence ${\sim}0.75$ across the deeper layers (Figure~\ref{fig:prelim-role-subtask-mismatch}(b)) gives $\bar p(V_s) \le (1+\sqrt{0.25})/2 = 0.75$: under this idealized reading, a single shared rank-$\rho$ adapter would cover at most $75\%$ of the average target subspace, and the corollary forces at least one of the two groups to be strictly under-covered. The dynamic LoRA composition in Sec.~\ref{subsec:mole} sidesteps this limit by maintaining \emph{multiple} rank-$\rho$ subspaces---factorized role experts $\{\Delta W^{\mathrm{role}}_r\}$ and subtask experts $\{\Delta W^{\mathrm{subtask}}_s\}$---and routing $(\Delta W^{\mathrm{role}}_r + \Delta W^{\mathrm{subtask}}_s)$ conditionally on $(r_i, s_i)$ via the prototype router, so each call accesses an adaptation subspace matched to its $(\text{role}, \text{subtask})$ context rather than a globally compromised one. The argument extends to $N>2$ groups by replacing $M$ with $\sum_{g=1}^N V_g V_g^\top$; the average-coverage maximum is then $\sum_{i=1}^{\rho}\lambda_i(M)/(N\rho)$, which decreases as more pairwise divergences exceed zero.

\end{document}